\newtheorem{proposizione}{Proposition}
\newtheorem{lemma}{Lemma}
\newtheorem{teorema}{Theorem}
\def\be{\begin{equation}}
\def\ee{\end{equation}}
\def\bea{\begin{eqnarray}}
\def\eea{\end{eqnarray}}
\def\nin{\noindent}
\def\nn{\nonumber}
\newcommand{\meanv}[1]{\left\langle#1\right\rangle}
\newcommand{\R}{\mathbb{R}}
\newcommand{\E}{\mathbb{E}}
\newcommand{\epsi}{\varepsilon}
\newcommand{\Or}[1]{O \left(#1\right)}
\def\s{\sigma}
\def\b{\beta}
\def\e{\eta}
\def\l{\lambda}
\begin{document}

\title{A Solvable Mean Field Model\\ of a Gaussian Spin Glass}


\author{Adriano Barra\footnote{Dipartimento di Fisica, Sapienza Universit\`a di Roma and GNFM, Gruppo di Roma1.},
Giuseppe Genovese\footnote{Dipartimento di  Matematica,
    Sapienza Universit\`a di Roma.}, \ Francesco Guerra\footnote{Dipartimento di Fisica, Sapienza Universit\`a di Roma and INFN, Sezione di Roma.}, Daniele Tantari\footnote{Dipartimento di Fisica, Sapienza Universit\`a di Roma.}}

\date{\today}

\maketitle

\begin{abstract}
\nin We introduce a mean field spin glass model with gaussian distribuited spins and pairwise interactions, 
whose couplings are drawn randomly from a gaussian distribution $\mathcal{N}(0,1)$ too. We completely control
 the main thermodynamical properties of the model (free energy, phase diagram, fluctuations theory) in the 
whole phase space. In particular we prove 
that in thermodynamic limit the free energy equals its replica symmetric expression.
\end{abstract}

\section*{Introduction}
Recently, some work has been done studying the properties of bipartite spin glasses \cite{HT}\cite{NN}\cite{Bip}. The main interest in these models is related to the peculiarity of the Hopfield Model, a well known model of very hard solution from a mathematical point of view (see \cite{T} and references therein), can be seen as a special bipartite model, with a party of \textit{usual} dichotomic spin, and another party of \textit{special} gaussian soft spin variables.

\nin In particular, from the investigation of dichotomic bipartite spin glasses, it has been shown that, at least to the Replica Symmetric approssimation (with zero external field), the model can be written as a convex combination of two different Sherrington-Kirkpatrick models, at different temperatures \cite{Bip}. This seems to be more than a hint that a similar structure should be conserved in the Hopfield Model, and infact we have recently shown that this is the case \cite{NN?}. 

\nin As a consequence, while the dichotomic spin glass has been intensively studied, the need for a clear picture of its gaussian counterpart is the main interest for this paper.

\nin The Gaussian Spin Model has been originally introduced together with the Spherical Model in \cite{BK}. Then, due to the natural divergences arising in such a model, the main interest has been concentrated in the Spherical one \cite{KT}\cite{CS}\cite{Tsfer}\cite{PT}\cite{T}, tought in some recent papers are discussed interesting properties of gaussian models similar to the one we introduce here for the first time \cite{BDG}\cite{BoK}\cite{FS}.

\nin Therefore in our work we extend here techniques previously developed for pairwise dichotomic spin glasses to their gaussian counterparts. 

\nin In Section 1 we introduce the model with all its related statistical mechanics package and regularize it so to avoid divergencies due to coupled fat tails of the soft (unbounded) spin. 

\nin In Section 2, we show how to get a rigorous control of the thermodynamic limit of the free energy. 

\nin Section 3 is left for the investigation of the high temperature limit (the ergodic behavior). 

\nin In Section 4 we develop a generalization of a sum rule for the free energy in terms of its replica symmetric approximation and an error term. The breaking of ergodicity is expected to be a critical phenomenon. 

\nin Section 5 is dedicated to a fluctuation theory for the order parameter.

\nin In Section 6 we develop the broken replica symmetric bound, coupled with the Parisi-like equation, and we show that it is equivalent to the bound given by replica symmetric solution.

\nin In Section 7 we finally prove a lower bound for the free energy, stating that the correct solution is the RS expression.

\section{Definition of the Model}
 We introduce a system on $N$ sites, whose generic configuration is defined by spin variables $z_i\in \R$, $i=1,2,...,N$ attached on each site. We call the
 external quenched disorder a set of $N^2$ indipendent and identical distributed random variables $J_{ij}$, defined for each couple of sites $(i,j)$.
 We assume each $J_{ij}$ to be a centered unit Gaussian $\mathcal{N}(0,1)$ i.e. $$\E(J_{ij})=0, \ \ \ \  \E(J^2_{ij})=1.$$ The interaction among spins is given
by  defining the Hamiltonian $$H_N(z,J)=-\frac{1}{\sqrt{2N}}\sum_{i,j}^NJ_{ij}z_iz_j-h\sum_{i=1}^Nz_i.$$ The first sum, extending to all spin couples, with the factor
 $1/\sqrt{N}$, is the typical long range spin-spin interaction of the mean field spin glass model. The second sum, extending to all sites, is the one-body
 interaction with a scalar external field $h \in \mathbb{R}$. All the thermodynamic properties of the model are codified in the partition function that we write symbolically as
 $$Z_N(\beta,J)=\sum_{configurations}e^{-\beta H_N(z,J)},$$for a given inverse temperature $\beta$.
 In our model we state that there are a number of identical $z$-type configurations proportional to $d\mu(z)$, where $d\mu (z)=d\mu (z_1)...d\mu (z_N)$,
 $d\mu(z_i)=(2\pi)^{-\frac{1}{2}}\exp(-z_i^2/2)$, so to justify the following definition
 \begin{equation}\label{e1}
 Z_N(\beta,J)=\int d\mu (z)e^{-\beta H_N(z,J)}=\E_ze^{-\beta H_N(z,J)} .
 \end{equation}
 \nin Substantially, we called this kind of model "fully gaussian spin glass" because the
 external quenched disorder as well as the value of the soft-spin variables are drawn from a Gaussian distribution  $\mathcal{N}(0,1)$. As early pointed out for instance in \cite{BK}, unfortunately these kind of models need to be regularized; in fact, the right side of
 $(\ref{e1})$ is not always well defined as the pairwise interaction bridges soft spins which are both Gaussian distributed.
 \newline
 It will be clear soon that a good definition is
 \be\label{e2}
 Z_N(\beta,J,\lambda)=\E_z\exp\big[-\beta H_N(z,J)-\frac{\beta^2}{4N}(\sum_{i=1}^N z_i^2)^2+\frac{\lambda}{2}\sum_{i=1}^Nz_i^2\big],
 \ee
 where the first additional term is needed for convergence of the integral over the Gaussian measure $\mu(z)$ as it essentially flattens the Gaussian tails of the variables $z_i$. The new parameter $\lambda$, within the
 last term of ($\ref{e2}$), instead is inserted just to modify the variance of the soft spins, as in several applications this can sensibly  vary. \\
 For a given inverse temperature (or noise level) $\beta$, we introduce the (quenched average of the) free energy per site $f_N(\beta)$, the Boltzmann state $\omega_J$ and the auxiliary function $A_N(\b)$ (namely the pressure), according to the definition
 \begin{equation}\label{e3}
 -\beta f_N(\beta)=A_N(\b)=N^{-1}\E \log Z_N(\beta,J),
 \end{equation}
 \begin{equation}
 \omega_J(O)=Z_N^{-1} \E_z O(z)\exp\big[-\beta H_N(z,J)-\frac{\beta^2}{4N}(\sum_{i=1}^N z_i^2)^2+\frac{\lambda}{2}\sum_{i=1}^Nz_i^2\big],
 \end{equation}
 where $O$ is a generic function of the $z$'s. In the notation $\omega_J$, we have stressed the dependence of the Boltzmann state on the external noise $J$, but,   of course, there is also a dependence on $\beta$,$h$ and $N$.\\ Let us now introduce the important concept of replicas. Consider a generic number $s$ of
 independent copies of the system, characterized by the spin variables $z^{(1)}_i,\ldots z^{(s)}_i$ distributed according to the product state
 \be \Omega_J=\omega^{(1)}_J\ldots\omega^{(s)}_J, \ee
 where all $\omega^{(\alpha)}_J$ act on each one $z_i^{(\alpha)}$'s, and are subject to the same sample $J$ of the external noise. Finally, for a generic smooth
 function $F(z^{(1)}_i,\ldots z^{(s)}_i)$ of the replicated spin variables, we define the $\meanv{.}$  average as
 \be\label{e4} \meanv{F(z^{(1)}_i,\ldots z^{(s)}_i)}=\E\Omega_J(F(z^{(1)}_i,\ldots z^{(s)}_i)). \ee
 Correlation functions are also well defined as overlap $q$ between replicas:
 $$
 q_{ab,N}=\frac{1}{N}\sum_{i=1}^{N}z^a_iz_i^b.
 $$
 Note that, once defined the overlap among replicas we can write
 \be
 Z_N(\b,\l,J)=\E_z\exp\left(-\beta\sqrt{\frac N 2} \mathcal{K}(z)-\frac{\beta^2}{4N}(\sum_{i=1}^N z_i^2)^2+\frac{\lambda}{2}\sum_{i=1}^Nz_i^2\right),
 \ee 
 where $\mathcal{K}(z)$ is a family of centered gaussian random variables with covariances $S_{zz'}=\E[\mathcal{K}(z)\mathcal{K}(z')]=q_{zz'}^2$ and the regularization term is just $\frac 1 2 \frac{\b^2 N}{2} q_{zz}^2=\frac 1 2 \frac{\b^2 N}{2} S_{zz}$.   

 \section{Thermodynamic Limit}
 The aim of this section is to show how to get a rigorous control of the infinite volume limit of the free energy $f_N$ (or similarly $A_N$). The main idea, inspired by \cite{GT}, is to         compare $A_N$, $A_{N_1}$ and $A_{N_2}$, with $N=N_1+N_2$. For this purpose we consider both  the original $N$
 site system and two independent subsystems made of by $N_1$ and $N_2$ soft spins respectively, so to define
 \bea \label{e5}
 Z_N(t)=&\E_z&\exp\left( \beta\sqrt{\frac{t}{2N}}\sum_{i,j=1}^N J_{ij}z_iz_j -t\frac{\beta^2}{4N}(\sum_{i=1}^N z_i^2)^2\right)\nn\\
       & &\exp\left( \beta\sqrt{\frac{1-t}{2N_1}}\sum_{i,j=1}^{N_1} J'_{ij}z_iz_j -(1-t)\frac{\beta^2}{4N_1}(\sum_{i=1}^{N_1} z_i^2)^2\right)\nn\\
       & &\exp\left( \beta\sqrt{\frac{1-t}{2N_2}}\sum_{i,j=N_1+1}^{N} J''_{ij}z_iz_j -(1-t)\frac{\beta^2}{4N_2}(\sum_{i=N_1+1}^{N} z_i^2)^2\right)\nn\\
       & &\exp\left(\beta \label{} h\sum_{i=1}^Nz_i\right)\exp\left(\frac{\lambda}{2}\sum_{i=1}^Nz_i^2\right),
 \eea
 with $0\leq t\leq 1$. The partition function $Z_N(t)$ interpolates between the original N-spin model (obtained for $t=1$) and the two subsystems (of sizes $N_1$ and $N_2$, obtained for $t=0$) equipped with independent noises $J'$ and $J''$, both independent of $J$, \textit{i.e.}
 \bea
 Z_N(1)&=&Z_N(\beta,J,h)\\
 Z_N(0)&=&Z_{N_1}(\beta,J',h)Z_{N_2}(\beta,J'',h).
 \eea
 As a consequence, if we define the interpolating function
 \begin{equation}
 \varphi(t)=\frac{1}{N}\E \log Z_N(t),
 \end{equation}
 taking into account the definition ($\ref{e3}$), we have
 \bea\label{e9}
 \varphi(1)&=&A_N(\beta,h)\nn,\\
 \varphi(0)&=&\frac{N_1}{N}A_{N_1}(\beta,h)+\frac{N_2}{N}A_{N_2}(\beta,h).
 \eea
 If we derive $\varphi (t)$ we obtain
 \bea\label {e11}
 \frac{d}{dt}\varphi(t)&&=\frac{1}{N}\E\left(\frac{\beta}{2\sqrt{2tN}}\sum_{i,j=1}^NJ_{ij}\omega_t(z_iz_j)\right)
 -\left\langle\frac{\beta^2}{4N^2}(\sum_{i=1}^N z_i^2)^2 \right\rangle\nn\\
 &&-\frac{1}{N}\E\left(\frac{\beta}{2\sqrt{2(1-t)N_1}}\sum_{i,j=1}^{N_1}J'_{ij}\omega_t(z_iz_j)\right)\nn\\
 &&-\frac{1}{N}\E\left(\frac{\beta}{2\sqrt{2(1-t)N_2}}\sum_{i,j=N_1+1}^{N}J''_{ij}\omega_t(z_iz_j)\right)\nn\\
 &&+\left\langle\frac{\beta^2}{4NN_1}(\sum_{i=1}^{N_1} z_i^2)^2 \right\rangle+\left\langle\frac{\beta^2}{4NN_2}(\sum_{i=N_1+1}^{N} z_i^2)^2 \right\rangle.
 \eea
 with $\omega_t(.)$ and the $\meanv{.}$ average as defined in ($\ref{e3}$)($\ref{e4}$) but corresponding to the \textit{Boltzmannfaktor} coupled to $Z_N(t)$ .
 \newline
 Let us now evaluate for example the first term: using a standard integration by parts on the external noise (Wick's theorem) we obtain
 \bea
 \frac{1}{N}\E\left(\frac{\beta}{2\sqrt{2tN}}\sum_{i,j=1}^{N}J_{ij}\omega_t(z_iz_j)\right)=
 \frac{\beta}{2N\sqrt{2tN}}\sum_{i,j=1}^{N}\E\left(\frac{\partial}{\partial_{J_{ij}}}\omega_t(z_iz_j)\right)\nn,
 \eea
and then we get
 \bea\label{ew}
 &&\frac{\beta}{2N\sqrt{2tN}}\sum_{i,j=1}^{N}\E\left(\frac{\partial}{\partial_{J_{ij}}}\omega_t(z_iz_j)\right)=\nn\\
 &=&\frac{\beta^2}{4N^2}\sum_{i,j=1}^{N}\left\langle z_i^2z_j^2\right\rangle
 -\frac{\beta^2}{4N^2}\sum_{i,j=1}^{N}\left\langle z_i^{(1)}z_j^{(1)}z_i^{(2)}z_j^{(2)}\right\rangle\nn\\
 &=&\left\langle\frac{\beta^2}{4N^2}(\sum_{i=1}^N z_i^2)^2 \right\rangle-\frac{\b^2}{4}\meanv{q_{12,N}^2}.
 \eea
 The other terms of  ($\ref{e11}$) can be evaluated in the same way, therefore
 \be\label{e6}
 \frac{d}{dt}\varphi (t)=-\frac{\beta^2}{4}\left(\left\langle q_{12,N}^2\right\rangle-\frac{N_1}{N}\left\langle q_{12,N_1}^2\right\rangle
 -\frac{N_2}{N}\left\langle q_{12,N_2}^2\right\rangle\right)\\,
 \ee
 where we defined the overlaps
 \bea
 q_{12,N_1}&=&\frac{1}{N_1}\sum_{i=1}^{N_1}z_i^{(1)}z_i^{(2)}\nn,\\
 q_{12,N_2}&=&\frac{1}{N_2}\sum_{i=N_1+1}^{N_1}z_i^{(1)}z_i^{(2)}.
 \eea
 Since $q_{12,N}$ is a convex linear combination of $q_{12,N_1}$ and $q_{12,N_2}$,
 \begin{equation}
 q_{12,N}=\frac{N_1}{N}q_{12,N_1}+\frac{N_2}{N}q_{12,N_2},
 \end{equation}
 and due to the convexity of the function $x\rightarrow x^2$, we have the inequality
 \be\label{e7}
 \left\langle q_{12,N}^2-\frac{N_1}{N} q_{12,N_1}^2-\frac{N_2}{N} q_{12,N_2}^2\right\rangle \leq 0.
 \ee
 A combination of the informations in ($\ref{e6}$) and ($\ref{e7}$) allows us to state the following result.
 \begin{lemma}
 The interpolating function is increasing in $t$ i.e. $\frac{d}{dt}\varphi (t)\geq 0.$
 \end{lemma}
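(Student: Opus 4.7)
The plan is to obtain the result as a one-line combination of two facts that the narrative has already put in place just above the lemma. I would first invoke the explicit $t$-derivative (\ref{e6}),
\[
\frac{d}{dt}\varphi(t) = -\frac{\beta^2}{4}\left\langle q_{12,N}^2 - \frac{N_1}{N}q_{12,N_1}^2 - \frac{N_2}{N}q_{12,N_2}^2\right\rangle,
\]
and then invoke the convexity inequality (\ref{e7}), which asserts that the bracketed expectation is non-positive. Multiplying a non-positive quantity by the non-positive prefactor $-\beta^2/4$ flips the sign and gives $\frac{d}{dt}\varphi(t) \geq 0$, which is exactly the claim.

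All the substantive work lies inside the derivations of (\ref{e6}) and (\ref{e7}), so I would make sure those are fully accounted for. For (\ref{e6}) the steps are: differentiate the three exponentials in $Z_N(t)$ to produce the six terms of (\ref{e11}), namely three noise-derivative terms (one per Gaussian family $J,J',J''$) plus three contributions from the $\beta^2/4N$ counterterms; then apply the standard Wick integration by parts illustrated in (\ref{ew}) to each noise-derivative term, yielding a ``diagonal square'' piece $\langle(\sum z_i^2)^2\rangle$ and a two-replica piece $-(\beta^2/4)\langle q^2\rangle$ on the appropriate site range; finally observe that each diagonal square cancels exactly with the corresponding explicit counterterm derivative, so that only the three overlap-squared contributions survive with the signs appearing in (\ref{e6}).

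For (\ref{e7}) I would use that $q_{12,N} = (N_1/N)q_{12,N_1} + (N_2/N)q_{12,N_2}$ and apply the pointwise convexity inequality for $x\mapsto x^2$, obtaining $q_{12,N}^2 \leq (N_1/N)q_{12,N_1}^2 + (N_2/N)q_{12,N_2}^2$ before averaging; since $\langle\cdot\rangle$ is a positive linear functional it preserves the inequality.

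The only delicate step, which I expect has been engineered upstream rather than inside this lemma, is the exact cancellation between the Wick-generated quartic diagonal term $(\beta^2/4N^2)\langle(\sum_i z_i^2)^2\rangle$ and the derivative of the $\beta^2/4N$ regularization. That cancellation is precisely why the specific counterterm in (\ref{e2}) was chosen with that prefactor; without it the $t$-derivative would not reduce to a sign-definite quadratic form in the overlaps, the combination of (\ref{e6}) and (\ref{e7}) would break down, and the lemma would fail. At the level of the lemma itself, however, this is free of charge and the proof reduces to the one-line sign argument above.
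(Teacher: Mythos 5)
Your proof is correct and follows exactly the paper's argument: the paper also obtains the lemma by combining the derivative formula (\ref{e6}) with the convexity inequality (\ref{e7}), after the same Wick integration by parts and cancellation of the diagonal quartic terms against the $\beta^2/4N$ regularization. Your remark that this cancellation is the reason for the specific choice of counterterm is a fair and accurate observation, not a gap.
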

 \nin By integrating in $t$ we get
 \be
 \varphi(1)=\varphi (0)+\int_0^1 \frac{d}{dt}\varphi(t) dt\geq \varphi(0)
 \ee
 and recalling the boundary conditions ($\ref{e9}$) we obtain the main result
 \begin{teorema}
 The following superadditivity property holds
 \be
 NA_N(\beta,h)\geq N_1A_{N_1}(\beta,h)+N_2A_{N_2}(\beta,h).
 \ee
 \end{teorema}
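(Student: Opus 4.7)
The plan is to prove the theorem as a one-line corollary of Lemma~1 via the interpolation machinery already set up: integrating $\ph'(t) \geq 0$ from $0$ to $1$ and invoking the boundary values in (\ref{e9}) gives $\ph(1) \geq \ph(0)$, which after multiplying by $N$ is exactly the superadditivity statement. So the entire task reduces to proving the lemma.

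First I would write out $\ph'(t)$ by differentiating $N^{-1}\E\log Z_N(t)$ explicitly: this produces the six terms displayed in (\ref{e11}), three coming from the $t$-derivatives of the Gaussian interaction sums (with couplings $J$, $J'$, $J''$) and three from the $t$-derivatives of the regularization squares $(\sum z_i^2)^2$ on the blocks of sizes $N$, $N_1$, $N_2$. Each disorder term has the form $(\b/2\sqrt{2 t N_\alpha})\,\E\sum J\,\omega_t(z_iz_j)$ and calls for a Gaussian integration by parts (Wick's theorem).

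The core computation is this Wick step, performed separately on the three disorder terms. For the $J$-sum one uses $\partial_{J_{ij}}\omega_t(z_iz_j) = \b\sqrt{t/(2N)}\bigl[\omega_t(z_i^2z_j^2) - \omega_t(z_iz_j)^2\bigr]$ to obtain, as in (\ref{ew}), a diagonal contribution $\meanv{(\b^2/4N^2)(\sum_i z_i^2)^2}$ together with an off-diagonal piece $-(\b^2/4)\meanv{q_{12,N}^2}$. The diagonal piece \emph{cancels exactly} the corresponding regularization term already present in $\ph'(t)$; this cancellation is precisely what motivated the choice of coefficient $\b^2/(4N)$ in (\ref{e2}). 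Repeating the argument on the $N_1$- and $N_2$-blocks and collecting the contributions produces (\ref{e6}), in which only overlap squares survive.

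To finish, I would invoke convexity: since $q_{12,N}$ is the convex combination $(N_1/N)q_{12,N_1}+(N_2/N)q_{12,N_2}$, Jensen applied to $x\mapsto x^2$ gives $q_{12,N}^2 \leq (N_1/N)q_{12,N_1}^2 + (N_2/N)q_{12,N_2}^2$ pointwise; averaging under $\meanv{\cdot}$ yields (\ref{e7}), and therefore $\ph'(t) \geq 0$. The main delicate point I anticipate is the Wick bookkeeping: one must check that the three potentially divergent diagonal terms $\meanv{(\sum z_i^2)^2}$ produced by integration by parts match, with opposite sign and identical prefactor, the three regularization derivatives, so that only the clean negative overlap expression remains. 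Once this structural cancellation is verified, the rest of the argument is purely convex-analytic and the superadditivity bound follows immediately.
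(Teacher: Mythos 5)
Your proposal is correct and follows essentially the same route as the paper: the same interpolating partition function, the same Gaussian integration by parts with the exact cancellation of the diagonal $(\sum_i z_i^2)^2$ terms against the regularization derivatives, and the same convexity argument on $q_{12,N}=\frac{N_1}{N}q_{12,N_1}+\frac{N_2}{N}q_{12,N_2}$ leading to $\varphi'(t)\geq 0$ and hence to superadditivity upon integration. No substantive difference to report.
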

 \nin The superadditivity property gives an immediate control of the thermodynamic limit \cite{Ru69}, and we can state the next
 \begin{teorema}
 The thermodynamic limit for $A_N(\beta,h)$ exists and equals its $\sup$ i.e.
 \begin{equation}
 \lim_{N\rightarrow\infty}A_N(\beta,h)=A(\beta,h)=\sup_NA_N(\beta,h).
 \end{equation}
 \end{teorema}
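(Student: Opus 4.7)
The plan is to combine Theorem 1 with Fekete's lemma in its superadditive form: whenever a sequence $(a_N)_{N\geq 1}$ satisfies $a_{N_1+N_2}\geq a_{N_1}+a_{N_2}$ and $a_N/N$ is bounded above, one has $\lim_N a_N/N = \sup_N a_N/N$. Taking $a_N := N A_N(\beta,h)$, Theorem 1 is precisely superadditivity of $a_N$, so the only missing ingredient is a uniform upper bound $A_N(\beta,h)\leq C$ independent of $N$. Once this is supplied, the standard argument (write $N = kM + r$ with $0\leq r < M$, iterate superadditivity $k$ times, divide by $N$ and let $N\to\infty$ at $M$ fixed) yields both existence of the limit and its identification with the supremum.

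To obtain the uniform upper bound I would use the annealed estimate. Jensen's inequality applied to the quenched average gives
\begin{equation*}
A_N(\beta,h) = \frac{1}{N}\mathbb{E}\log Z_N(\beta,J,\lambda) \leq \frac{1}{N}\log \mathbb{E} Z_N(\beta,J,\lambda).
\end{equation*}
A direct Gaussian integration over the i.i.d.\ couplings $J_{ij}\sim\mathcal{N}(0,1)$ yields
\begin{equation*}
\mathbb{E}_J \exp\!\Bigl(\frac{\beta}{\sqrt{2N}}\sum_{i,j=1}^N J_{ij} z_i z_j\Bigr) = \exp\!\Bigl(\frac{\beta^2}{4N}\bigl(\sum_{i=1}^N z_i^2\bigr)^2\Bigr),
\end{equation*}
which is \emph{exactly} compensated by the regularization term in (\ref{e2}). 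The remaining $z$-integrand then factorizes over sites and, provided $\lambda<1$, gives
\begin{equation*}
\mathbb{E} Z_N = \Bigl((1-\lambda)^{-1/2}\exp\!\frac{\beta^2 h^2}{2(1-\lambda)}\Bigr)^{N},
\end{equation*}
whence $A_N(\beta,h)\leq \tfrac{\beta^2 h^2}{2(1-\lambda)} - \tfrac{1}{2}\log(1-\lambda)$ uniformly in $N$.

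With both pieces in hand the conclusion is automatic from Fekete's lemma, as cited in \cite{Ru69}. The main point to check, rather than an obstacle, is the exactness of the cancellation between the Gaussian integration of the couplings and the regularization term: this is precisely why the coefficient $\beta^2/(4N)$ was chosen in the definition of $Z_N$, and without that choice even the uniform upper bound would fail and the whole thermodynamic-limit argument would collapse. Once the bound is established everything else is routine.
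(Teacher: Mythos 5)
Your proposal is correct and follows essentially the same route as the paper: Theorem 1 gives superadditivity of $N A_N(\beta,h)$, and the conclusion is the superadditive form of Fekete's lemma, which the paper simply cites via \cite{Ru69}. The one ingredient you add explicitly --- the uniform upper bound $A_N \leq \frac{\beta^2 h^2}{2(1-\lambda)} - \frac{1}{2}\log(1-\lambda)$ obtained from Jensen's inequality and the exact cancellation of the regularizer against the Gaussian integration over the couplings --- is left implicit here but is exactly the annealed bound the paper establishes in Section 3 (at $h=0$), so your verification is a welcome completion rather than a departure.
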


\section{High Temperature behavior}
 We start to analyze our model characterizing the high temperature regime at zero external field. First we define the annealed free energy of the model
 \be\label{he1}
 -\beta f_N^A(\beta,\lambda)=A_N^A(\beta,\lambda)=\frac{1}{N}\log \E Z_N(\beta,\lambda,J),
 \ee
 that can be easily computed as in the following
 \begin{proposizione}
 For $\lambda<1$ the annealed free energy of the model in the thermodynamic limit is well defined and coincides with
 \be\label{p1}
 -\beta f^A(\beta,\lambda)=\lim_{N\rightarrow \infty}A^A_N(\beta,\lambda)=-\frac{1}{2}\log(1-\lambda).
 \ee
 \end{proposizione}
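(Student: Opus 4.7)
The plan is essentially a direct computation: since the regularizing quartic term in the Hamiltonian was designed to cancel exactly what the Gaussian average over the disorder produces, the annealed partition function collapses to a product of one-dimensional Gaussian integrals.

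First I would apply Fubini/Tonelli (the integrand is positive, so no integrability issue arises) to exchange the order of integration in $\E Z_N(\beta,\lambda,J)$, taking the expectation over $J$ \emph{inside} the expectation over $z$:
\begin{equation}
\E Z_N = \E_z\left\{\E_J\exp\left[\beta\sqrt{\tfrac{1}{2N}}\sum_{i,j=1}^N J_{ij}z_iz_j\right]\exp\left[-\tfrac{\beta^2}{4N}\Big(\sum_{i=1}^N z_i^2\Big)^{\!2}+\tfrac{\lambda}{2}\sum_{i=1}^N z_i^2\right]\right\}.
\end{equation}

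Next, for fixed $z$ the quantity $\sum_{i,j}J_{ij}z_iz_j$ is a centered Gaussian with variance $\sum_{i,j}z_i^2z_j^2=(\sum_i z_i^2)^2$, so the moment generating function gives
\begin{equation}
\E_J\exp\left[\beta\sqrt{\tfrac{1}{2N}}\sum_{i,j}J_{ij}z_iz_j\right]=\exp\left[\tfrac{\beta^2}{4N}\Big(\sum_{i=1}^N z_i^2\Big)^{\!2}\right].
\end{equation}
This is precisely the term that the regularization in \eqref{e2} was tuned to cancel. After the cancellation, what remains is a product of independent one-dimensional Gaussian integrals,
\begin{equation}
\E Z_N = \prod_{i=1}^N \E_{z_i}\exp\!\left[\tfrac{\lambda}{2}z_i^2\right] = (1-\lambda)^{-N/2},
\end{equation}
valid precisely when $\lambda<1$ (otherwise the single-site Gaussian integral diverges, consistent with the annealed free energy being undefined there). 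Taking $\frac{1}{N}\log$ yields the claimed formula, and since the computation is exact for every finite $N$, the thermodynamic limit is immediate and no passage to the limit is needed.

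There is no real obstacle: the only moving parts are Fubini (trivial by positivity) and one Gaussian integral, once over $J$ and once over each $z_i$. The only thing worth flagging in the write-up is the role of the condition $\lambda<1$, which is exactly the convergence threshold for $\E_{z_i}\exp[\lambda z_i^2/2]$, and the fact that this computation explains \emph{a posteriori} why the coefficient $\beta^2/(4N)$ was the right choice for the regularizing counter-term in \eqref{e2}.
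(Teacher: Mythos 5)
Your proposal is correct and follows essentially the same route as the paper: exchange the $J$ and $z$ expectations, use the Gaussian moment generating function to produce $\exp[\tfrac{\beta^2}{4N}(\sum_i z_i^2)^2]$, cancel it against the regularizing term, and evaluate the resulting product of one-site integrals as $(1-\lambda)^{-N/2}$. The added remarks on Tonelli and on $\lambda<1$ being the convergence threshold are sound but not substantively different from the paper's computation.
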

\begin{proof}
\bea\label{z1}
 \mathbb{E}_JZ_N&=&\mathbb{E}_J\mathbb{E}_{z}\exp\left(\frac{\beta}{\sqrt{2N}}\sum_{i,j=1}^NJ_{ij}z_iz_j-\frac{\beta^2}{4N}(\sum_{i=1}^Nz_i^2)^2
 +\frac{\lambda}{2}\sum_{i=1}^Nz_i^2\right)\nn\\
 &=&\mathbb{E}_{z}\exp\left(-\frac{\beta^2}{4N}(\sum_{i=1}^Nz_i^2)^2
 +\frac{\lambda}{2}\sum_{i=1}^Nz_i^2\right)\mathbb{E}_{J}\exp\left(\frac{\beta}{\sqrt{2N}}\sum_{i,j=1}^NJ_{ij}z_iz_j\right)\nn\\
 &=&\mathbb{E}_{z}\exp\left(-\frac{\beta^2}{4N}(\sum_{i=1}^Nz_i^2)^2
 +\frac{\lambda}{2}\sum_{i=1}^Nz_i^2\right)\exp\left(\frac{\beta^2}{4N}\sum_{i,j=1}^Nz_i^2z_j^2\right)\nn\\
 &=&\mathbb{E}_{z}\exp\left(\frac{\lambda}{2}\sum_{i=1}^Nz_i^2\right)\nn\\
 &=&(1-\l)^{-\frac{N}{2}}
 \eea
 Thus ($\ref{p1}$) follows from ($\ref{he1}$) and the proposition is proven.
\end{proof}
\nin We define the high temperature regime as the region in the $(\beta,\lambda)$ plane where the quenched free energy is equal to the annealed one. We already know that the annealed approximation is an upper bound for the pressure, infact a simple application of the Jensen inequality shows that
\be\label{annleq}
\frac 1 N\E\log Z_N(\b,\l; J)\leq \frac 1 N\log \E Z_N(\b,\l,J)=A^A(\b,\l).
\ee
On the other side we have that
\be\label{anngeq}
\frac 1 N\E\log Z_N(\b,\l; J)\geq \frac 1 N\E\log Z'_N(\b,\l; J),
\ee
where $Z'_N(\b,\l;J)$ is an auxiliary partition function in which diagonal terms of the spin-spin interaction are neglected, i.e.
\bea
Z'_N(\b,\l;J)&=&\E_z\exp\left(-\frac{1}{\sqrt{2N}}\sum_{i\neq j}^NJ_{ij}z_iz_j-\frac{\beta^2}{4N}(\sum_{i=1}^N z_i^2)^2+\frac{\lambda}{2}\sum_{i=1}^Nz_i^2\right)\nn\\
&=&\E_z\exp\left(-\frac{1}{\sqrt{N}}\sum_{i<j}^NJ_{ij}z_iz_j-\frac{\beta^2}{4N}(\sum_{i=1}^N z_i^2)^2+\frac{\lambda}{2}\sum_{i=1}^Nz_i^2\right),\nn
\eea 
where we have noted that $\frac 1 {\sqrt{2}}(J_{ij}+J_{ji})$ is a centered gaussian random variable $\mathcal{N}(0,1)$ that we have simply denoted by $J_{ij}$. Inequality (\ref{anngeq}) follows by an other application of the Jensen inequality on the $J_{ii}$ noises:
\bea
\E\log Z_N(\b,\l; J)&=&\E_{J_{ij}}\E_{J_{ii}}\log Z_N(\b,\l; J_{ij},J_{ii})\nn\\
&\geq& \E_{J_{ij}}\log Z_N(\b,\l; J_{ij},\E_{J_{ii}}[J_{ii}])\nn\\
&=&\E_{J_{ij}}\log Z_N(\b,\l; J_{ij},0)=\E\log Z'_N(\b,\l; J),\nn
\eea
Note that the auxiliary partition function $Z'_N$ gives the same annealed approximation of $Z_N$; infact we have the following
\begin{proposizione}
For $\l<1$,
\be
\lim_{N\rightarrow\infty}\frac 1 N \log \E_J Z'_N(\b,\l;J)= -\frac{1}{2}\log(1-\lambda)= A^A(\b,\l)
\ee
\end{proposizione}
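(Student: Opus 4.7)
The strategy is to integrate out the Gaussian disorder first and observe that removing the diagonal terms $J_{ii}z_i^2$ is precisely what allows the $J$-average to cancel the regularization factor $\exp(-\tfrac{\b^2}{4N}(\sum_i z_i^2)^2)$ exactly. After symmetrizing the off-diagonal couplings (as the paper already does), the exponent is linear in the independent $\mathcal N(0,1)$ variables $\{J_{ij}\}_{i<j}$ with coefficients $\b z_iz_j/\sqrt N$, so
\begin{equation*}
\E_J\exp\Bigl(\frac{\b}{\sqrt N}\sum_{i<j}J_{ij}z_iz_j\Bigr)=\exp\Bigl(\frac{\b^2}{2N}\sum_{i<j}z_i^2z_j^2\Bigr)=\exp\Bigl(\frac{\b^2}{4N}\Bigl[\bigl(\textstyle\sum_{i=1}^N z_i^2\bigr)^2-\sum_{i=1}^N z_i^4\Bigr]\Bigr).
\end{equation*}
Plugging this into the definition of $Z'_N$, the $(\sum_i z_i^2)^2$ piece cancels the regularization exactly, leaving
\begin{equation*}
\E_J Z'_N(\b,\l;J)=\E_z\exp\Bigl(-\frac{\b^2}{4N}\sum_{i=1}^N z_i^4+\frac{\l}{2}\sum_{i=1}^N z_i^2\Bigr).
\end{equation*}

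Next, because the reference measure $d\mu(z)=\prod_i d\mu(z_i)$ is a product and the integrand has become a product across sites, the computation factorizes into $N$ identical one-dimensional integrals and
\begin{equation*}
\frac{1}{N}\log\E_J Z'_N(\b,\l;J)=\log\E_z\exp\Bigl(-\frac{\b^2}{4N}z^4+\frac{\l}{2}z^2\Bigr),
\end{equation*}
with $z\sim\mathcal N(0,1)$ a single scalar variable. Pointwise in $z$ the integrand converges as $N\to\infty$ to $e^{\l z^2/2}$, whose $\E_z$-expectation is the elementary Gaussian integral $(1-\l)^{-1/2}$, well defined precisely when $\l<1$.

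The only step that deserves a word of justification is the interchange of $\lim_N$ with $\E_z$, and this is cheap: since $\exp(-\b^2 z^4/(4N))\le 1$ uniformly in $N$, the integrand is dominated by the integrable envelope $e^{\l z^2/2}\in L^1(d\mu)$, so dominated convergence yields
\begin{equation*}
\lim_{N\to\infty}\frac{1}{N}\log\E_J Z'_N(\b,\l;J)=\log\E_z\, e^{\l z^2/2}=-\frac{1}{2}\log(1-\l)=A^A(\b,\l),
\end{equation*}
which is the claim. There is no real obstacle: the cancellation is dictated by the form of the regularization, and the $z^4$ correction is an $\Or{1/N}$ perturbation that vanishes in the thermodynamic limit by a dominated-convergence argument.
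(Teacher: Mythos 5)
Your proposal is correct and follows essentially the same route as the paper: integrate out the off-diagonal disorder, observe the exact cancellation of the regularization term against $\frac{\beta^2}{4N}\bigl[(\sum_i z_i^2)^2-\sum_i z_i^4\bigr]$, factorize over sites, and pass to the limit in the single-site integral. The only cosmetic difference is that the paper first rescales $z$ to pull out the factor $(1-\lambda)^{-N/2}$ and then argues the residual integral tends to $1$, whereas you keep $\lambda$ in the exponent and invoke dominated convergence directly; this is the same computation, with the justification of the final limit actually spelled out more explicitly in your version.
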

\begin{proof}
\bea\label{z'2}
 \mathbb{E}_JZ'_N&=&\mathbb{E}_J\mathbb{E}_{z}\exp\left(\frac{\beta}{\sqrt{N}}\sum_{i<j}^NJ_{ij}z_iz_j-\frac{\beta^2}{4N}(\sum_{i=1}^Nz_i^2)^2
 +\frac{\lambda}{2}\sum_{i=1}^Nz_i^2\right)\nn\\
 &=&\mathbb{E}_{z}\exp\left(-\frac{\beta^2}{4N}(\sum_{i=1}^Nz_i^2)^2
 +\frac{\lambda}{2}\sum_{i=1}^Nz_i^2\right)\mathbb{E}_{J}\exp\left(\frac{\beta}{\sqrt{N}}\sum_{i<j}^NJ_{ij}z_iz_j\right)\nn\\
 &=&\mathbb{E}_{z}\exp\left(-\frac{\beta^2}{4N}(\sum_{i=1}^Nz_i^2)^2
 +\frac{\lambda}{2}\sum_{i=1}^Nz_i^2\right)\exp\left(\frac{\beta^2}{2N}\sum_{i<j}^Nz_i^2z_j^2\right)\nn\\
 &=&\mathbb{E}_{z}\exp\left(\frac{\lambda}{2}\sum_{i=1}^Nz_i^2-\frac{\beta^2}{4N}\sum_{i=1}^Nz_i^4\right)\nn\\
 &=&(1-\lambda)^{-\frac{N}{2}}\left(\int \frac{dz}{\sqrt{2\pi}}e^{-\frac{1}{2}z^2-\frac{\beta^2}{4N(1-\lambda)^2}z^4}\right)^N
 \eea
Now, putting $\b_\l=\frac{\b}{1-\l}$, we notice that the function in the integral
 \be\label{eq:besselK}
 \int\frac{dz}{\sqrt{2\pi}}e^{-\frac{1}{2}z^2-\frac{\beta_\l^2}{4N}z^4}
 \ee
approaches to 1 uniformly for $0\leq\lambda<1$ when $N$ grows to infinity, and so the integral, that completes the proof.
\end{proof}

\nin Now, we can control  the high temperature region of $Z'_N$ studying the fluctuations of the random variable $Z'_N/\E Z'_N$ in according to the Borel-Cantelli lemma approach \cite{HT}\cite{T}. The following lemma holds:
 \begin{lemma}\label{lann1}
For $\beta_{\lambda}=\frac{\beta}{1-\lambda}\leq 1$ we have
 \be
 \limsup_{N\rightarrow\infty}\frac{\mathbb{E}_J(Z_N^{'2})}{\mathbb{E}^2_J(Z'_N)}\leq \frac{1}{\sqrt{1-\beta_{\lambda}^2}}.
 \ee
 \end{lemma}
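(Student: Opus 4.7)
The strategy is to compute $\E_J Z_N^{'2}$ explicitly by a replica calculation, very much in the spirit of the preceding proposition, and extract its large--$N$ asymptotics. First, write the square as an expectation over two independent replicas $z^{(1)},z^{(2)}$ and integrate out the $J_{ij}$'s (Gaussian Wick's formula):
$$\E_J Z_N^{'2}=\E_{z^{(1)},z^{(2)}}\exp\left(\frac{\lambda}{2}\sum_{i,a}(z_i^{(a)})^{2}-\frac{\beta^{2}}{4N}\sum_a\Bigl(\sum_i(z_i^{(a)})^{2}\Bigr)^{2}+\frac{\beta^{2}}{2N}\sum_{i<j}\bigl(z_i^{(1)}z_j^{(1)}+z_i^{(2)}z_j^{(2)}\bigr)^{2}\right).$$
Expanding the last square and using $\sum_{i<j}(z_i z_j)^{2}=\tfrac12[(\sum_i z_i^{2})^{2}-\sum_i z_i^{4}]$, the single--replica contributions cancel against the regularization; combining the remaining quartic pieces via $(z_1^{2}+z_2^{2})^{2}=z_1^{4}+z_2^{4}+2z_1^{2}z_2^{2}$ one collects the exponent in the compact form
$$\E_J Z_N^{'2}=\E_{z^{(1)},z^{(2)}}\exp\left(\frac{\lambda}{2}\sum_{i,a}(z_i^{(a)})^{2}-\frac{\beta^{2}}{4N}\sum_i\bigl((z_i^{(1)})^{2}+(z_i^{(2)})^{2}\bigr)^{2}+\frac{\beta^{2}N}{2}q_{12}^{2}\right).$$

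Next, linearize the overlap by Hubbard--Stratonovich,
$$\exp\Bigl(\tfrac{\beta^{2}N}{2}q_{12}^{2}\Bigr)=\int\tfrac{dx}{\sqrt{2\pi}}e^{-x^{2}/2}\exp\Bigl(\tfrac{\beta x}{\sqrt{N}}\sum_i z_i^{(1)}z_i^{(2)}\Bigr),$$
interchange $\E_x$ and $\E_z$ by Tonelli (everything is positive), and factorize over the sites to obtain
$$\E_J Z_N^{'2}=\int\frac{dx}{\sqrt{2\pi}}e^{-x^{2}/2}\,I_N(x)^{N},\quad I_N(x)=\int\frac{dz_1\,dz_2}{2\pi}\exp\Bigl(-\tfrac{1-\lambda}{2}(z_1^{2}+z_2^{2})-\tfrac{\beta^{2}}{4N}(z_1^{2}+z_2^{2})^{2}+\tfrac{\beta x}{\sqrt{N}}z_1z_2\Bigr).$$
The quartic term keeps $I_N(x)$ finite for every $x\in\R$; for fixed $x$ it is negligible as $N\to\infty$, and a two--dimensional Gaussian computation gives $I_N(x)\simeq(1-\lambda)^{-1}(1-\beta_\lambda^{2}x^{2}/N)^{-1/2}$ on the natural range $|x|<\sqrt{N}/\beta_\lambda$.

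Dividing by $\E_J^{2}Z'_N\sim(1-\lambda)^{-N}$ from the previous proposition reduces the lemma to
$$\limsup_{N\to\infty}\int\frac{dx}{\sqrt{2\pi}}e^{-x^{2}/2}\Bigl(1-\frac{\beta_\lambda^{2}x^{2}}{N}\Bigr)^{-N/2}\leq(1-\beta_\lambda^{2})^{-1/2}.$$
Pointwise the integrand tends to $\exp(\beta_\lambda^{2}x^{2}/2)$, whose Gaussian integral equals $(1-\beta_\lambda^{2})^{-1/2}$ precisely when $\beta_\lambda<1$, so only the exchange of $\limsup$ and integral remains. This will be the main technical obstacle, since the limiting exponential diverges as $|x|\to\sqrt{N}/\beta_\lambda$: I would split the $x$--integral at $|x|=c\sqrt{N}$ with some $c<1/\beta_\lambda$, apply dominated convergence on the bulk $|x|<c\sqrt{N}$ (dominated by $e^{-(1-c^{2}\beta_\lambda^{2})x^{2}/2}$, which is integrable), and control the tail $|x|\geq c\sqrt{N}$ by exploiting the quartic regularization in $I_N(x)$ together with the Gaussian concentration of $x$, showing that its contribution is $o(1)$ in $N$.
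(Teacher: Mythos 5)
Your strategy coincides with the paper's: a replica/Wick computation of $\E_J Z_N^{'2}$, Hubbard--Stratonovich linearization of the $\frac{\beta^2N}{2}q_{12}^2$ term, factorization over sites, and a bulk/tail splitting of the auxiliary Gaussian integral at $|x|\sim\sqrt N$ (the paper drops the cross term $e^{-\frac{\beta^2}{2N}\sum_i z_i^{(1)2}z_i^{(2)2}}\le 1$ early and decouples the replicas in the one-site integral via the concavity of $a\mapsto\E_y e^{-(y+a)^4}$, where you compute the $2\times2$ Gaussian determinant directly; these are cosmetic differences). One repairable issue first: ``dividing by $\E_J^2 Z'_N\sim(1-\lambda)^{-N}$'' hides the fact that $\E_J^2Z'_N=(1-\lambda)^{-N}c_N^{2N}$ with $c_N=\E_z e^{-\beta_\lambda^2 z^4/4N}=1-\tfrac{3\beta_\lambda^2}{4N}+o(1/N)$, so $c_N^{2N}\to e^{-3\beta_\lambda^2/2}\neq 1$. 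If you simultaneously bound $I_N(x)$ from above by discarding its quartic term, these per-site $1+O(1/N)$ errors exponentiate and you land on $e^{3\beta_\lambda^2/2}(1-\beta_\lambda^2)^{-1/2}$ rather than the stated constant; to recover it you must keep the quartic correction in both numerator and denominator and show their ratio is $\le 1$, which the paper does by rescaling $z\mapsto z(1-\beta_\lambda^2x^2/N)^{1/2}$ in the numerator and noting the rescaled quartic coefficient is larger.

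The serious gap is the step you defer: showing the tail $|x|\ge c\sqrt N$ contributes $o(1)$. It does not --- its contribution is infinite. For $|x|\gg\sqrt N$, restricting $I_N(x)$ to the ray $z_1=z_2=u$ and optimizing $\bigl(\tfrac{\beta x}{\sqrt N}-(1-\lambda)\bigr)u^2-\tfrac{\beta^2}{N}u^4$ gives $I_N(x)\gtrsim e^{x^2/4}$ up to polynomial factors, hence $e^{-x^2/2}I_N(x)^N\gtrsim e^{(N-2)x^2/4}$ and the $x$-integral diverges for $N\ge 3$. This is not an artifact of the linearization: in the defining integral, along $z^{(1)}=z^{(2)}=t(1,\dots,1)$ the full exponent of $\E_J Z_N^{'2}$ (Gaussian weights included) equals $(\lambda-1)Nt^2+\beta^2(\tfrac N2-1)t^4\to+\infty$. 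The point is that the regularization $-\frac{\beta^2}{4N}(\sum_i z_i^2)^2$ in each replica is exactly consumed cancelling the single-replica part of the $J$-average, and the leftover quartic terms are of order $\|z\|^4/N^2$ on delocalized configurations --- too weak to control $\frac{\beta^2N}{2}q_{12}^2\le\frac{\beta^2}{2N}\|z^{(1)}\|^2\|z^{(2)}\|^2$. So $\E_JZ_N^{'2}=+\infty$ for $N\ge3$ and the second-moment ratio cannot be bounded as stated. You should be aware that the paper's own handling of this region --- the assertion that $e^{-g^2/4}(\cdots)^N$ is concave in $g$ with its maximum at $g=0$ --- fails for the same reason; the one step you postponed as ``the main technical obstacle'' is exactly where the published argument breaks down, and fixing it would require strengthening the regularization or restricting the configuration space, not a sharper tail estimate.
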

 \nin Before proving lemma $\ref{lann1}$ we note that it is a sufficient condition to state the following
 \begin{lemma}\label{ann}
 In the region of the $(\beta,\lambda)$ plane defined by $\beta_\lambda<1$, i.e. $\beta<1-\lambda$.
 \be
 \lim_{N\rightarrow\infty}\frac 1 N \E\log Z_N'(\b,\l;J)=\lim_{N\rightarrow\infty}\frac 1 N\log \E Z'_N(\b,\l;J)=A^A(\b,\l)
 \ee
 \end{lemma}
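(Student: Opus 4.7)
The plan is to combine an easy upper bound via Jensen with a lower bound that exploits Lemma \ref{lann1} through a Paley--Zygmund inequality, and then to promote the resulting positive-probability estimate to an expectation-level bound via self-averaging.

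The upper bound is immediate: Jensen's inequality gives $\frac{1}{N}\E\log Z'_N \le \frac{1}{N}\log \E Z'_N$, which converges to $-\tfrac12\log(1-\l)=A^A(\b,\l)$ by Proposition 2. For the matching lower bound, set $Y_N:=Z'_N/\E Z'_N$, so $\E Y_N=1$ and, by Lemma \ref{lann1}, $\limsup_N \E Y_N^2 \le K:=(1-\b_\l^2)^{-1/2}$. The Paley--Zygmund inequality then yields, for each $\theta\in(0,1)$ and $N$ large,
$$P\bigl(Y_N\ge\theta\bigr)\;\ge\;\frac{(1-\theta)^2}{\E Y_N^2}\;\ge\;(1-\theta)^2\sqrt{1-\b_\l^2}\,(1+o(1))\;>\;0,$$
so on an event of non-vanishing probability $c(\theta)>0$ one has
$$\frac{1}{N}\log Z'_N\;\ge\;\frac{1}{N}\log\E Z'_N+\frac{\log\theta}{N}\;\xrightarrow[N\to\infty]{}\;A^A(\b,\l).$$

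To turn this ``with positive probability'' statement into a statement about $\tfrac{1}{N}\E\log Z'_N$, I would establish self-averaging of $\tfrac{1}{N}\log Z'_N$. This is exactly the Borel--Cantelli step alluded to in the text: applied to the events $\{|Y_{N_k}-1|>\delta\}$ along a sparse subsequence $N_k$, the Chebyshev consequence of Lemma \ref{lann1} (together with a superadditivity argument for $\E\log Z'_N$ analogous to Theorem 2) yields almost-sure convergence of $\tfrac{1}{N_k}\log Z'_{N_k}$ to a deterministic limit, which the Paley--Zygmund estimate pins down to be at least $A^A(\b,\l)$. More robustly, Gaussian concentration of measure for $J\mapsto \tfrac{1}{N}\log Z'_N(J)$ --- whose Lipschitz norm is precisely controlled by the regularization $-\tfrac{\b^2}{4N}(\sum_i z_i^2)^2$ present in $Z'_N$, which keeps $\omega(z_iz_j)^2$ integrable uniformly in $N$ --- directly gives $P(|\tfrac{1}{N}\log Z'_N - \tfrac{1}{N}\E\log Z'_N|>\epsilon)\to 0$, so the Paley--Zygmund lower bound transfers to $\tfrac{1}{N}\E\log Z'_N\ge A^A(\b,\l)-\epsilon-o(1)$. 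Letting $\epsilon\to 0$ and combining with the Jensen upper bound closes the argument.

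The main obstacle is this concentration / self-averaging step. The naive consequence of Lemma \ref{lann1} is only $P(|Y_N-1|>\delta)\le(K-1)/\delta^2$, which is $O(1)$ in $N$ and therefore not summable, so a straight Borel--Cantelli fails without further input. One must either pass to a sparse subsequence and reconstruct the full limit via monotonicity of $\E\log Z'_N$, or exploit an exponential concentration estimate from Gaussian isoperimetry; in both routes the quartic regularization in $Z'_N$ is the essential technical feature, as it is precisely what controls the Boltzmann moments that appear both in the second-moment computation of Lemma \ref{lann1} and in the Lipschitz bound needed for concentration.
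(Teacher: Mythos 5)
Your overall strategy (Jensen for the upper bound, Paley--Zygmund applied to $Y_N=Z'_N/\E Z'_N$ via Lemma \ref{lann1} for a positive-probability lower bound, then self-averaging to transfer it to the mean) is exactly the standard second-moment argument that the paper itself invokes without writing out: the text merely declares Lemma \ref{lann1} ``a sufficient condition'' and points to the Borel--Cantelli approach of the cited references, so you are filling in a proof the paper omits rather than diverging from it. The Jensen step and the Paley--Zygmund step are correct as written.

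The self-averaging step is where both of your proposed completions have concrete problems. The sparse-subsequence Borel--Cantelli route cannot work: as you yourself note, Lemma \ref{lann1} only yields $P(|Y_N-1|>\delta)\le (K-1)/\delta^2$ with $K=(1-\beta_\lambda^2)^{-1/2}$, a bound that does not decay in $N$, and passing to a subsequence does not make a non-decaying sequence summable --- no amount of sparsity rescues this. The Gaussian-isoperimetry route is the right idea but is misstated: because the spins are unbounded, the map $J\mapsto\frac1N\log Z'_N(J)$ is \emph{not} Lipschitz with a deterministic constant (its gradient involves the random quantities $\omega(z_iz_j)$), so the isoperimetric concentration inequality does not apply directly, and the regularization controls only the \emph{moments} of these quantities, not a sup norm. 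What does work is the Gaussian Poincar\'e inequality, which needs only the expected squared gradient:
$$\Var(\log Z'_N)\;\le\;\E\sum_{i<j}\frac{\beta^2}{N}\,\omega(z_iz_j)^2\;\le\;\frac{\beta^2}{2N}\,\E\Bigl(\sum_{i=1}^N\omega(z_i^2)\Bigr)^2\;=\;\frac{\beta^2N}{2}\,\E\,\omega(q_{11})^2\;\le\;\frac{\beta^2N}{2}\meanv{q_{11}^2},$$
and $\meanv{q_{11}^2}$ is bounded uniformly in $N$ precisely because of the regularizing factor $e^{-\frac{\beta^2N}{4}q_{11}^2}$ in the Boltzmann weight. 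Hence $\Var(\frac1N\log Z'_N)=O(1/N)$ and $\frac1N\log Z'_N$ concentrates around its mean in probability by Chebyshev, which is all you need: for $N$ large the Paley--Zygmund event (probability at least $c(\theta)>0$) and the concentration event (probability $1-O(1/N)$) must intersect, giving $\frac1N\E\log Z'_N\ge\frac1N\log\E Z'_N-\epsilon-o(1)$ with no appeal to superadditivity or almost-sure convergence. With that replacement your argument closes.
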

 \nin Thanks to inequalities (\ref{annleq}) and (\ref{anngeq}), we have proven the following main
 \begin{teorema}
 The quenched free energy of the Gaussian spin glass model at zero external field does coincide with the annealed one
 \be
  -\b f(\b,\l)=\lim_{N\rightarrow\infty}\frac 1 N \E\log Z_N(\b,\l;J)= -\frac{1}{2}\log(1-\lambda)
 \ee
 in the region of the $(\beta,\lambda)$ plane defined by $\beta<1-\lambda$.
 \end{teorema}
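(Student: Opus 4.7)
The plan is to prove the theorem by a two-sided squeeze on the quenched pressure $A_N(\beta,\lambda)$, assembling pieces already in place. For the upper bound I invoke Jensen's inequality (\ref{annleq}) together with Proposition 1: these give $A_N(\beta,\lambda) \leq A^A_N(\beta,\lambda) \to -\tfrac{1}{2}\log(1-\lambda)$ throughout the strip $\lambda<1$, hence in particular in the high-temperature region $\beta<1-\lambda$.

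For the matching lower bound I route through the auxiliary partition function $Z'_N$, with the diagonal $J_{ii}$ couplings removed. Inequality (\ref{anngeq}) gives $\tfrac{1}{N}\E\log Z_N \geq \tfrac{1}{N}\E\log Z'_N$, so it suffices to show that the right-hand side converges to the same annealed value $-\tfrac{1}{2}\log(1-\lambda)$. This is exactly Lemma \ref{ann}, and the theorem follows by chaining
\[
-\tfrac{1}{2}\log(1-\lambda) \;=\; \lim_N \tfrac{1}{N}\E\log Z'_N \;\leq\; \liminf_N A_N \;\leq\; \limsup_N A_N \;\leq\; -\tfrac{1}{2}\log(1-\lambda),
\]
where existence of the limit for $A_N$ is guaranteed by Theorem 2.

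The real work is therefore Lemma \ref{ann}, which I would prove by the standard second-moment / Borel--Cantelli scheme: Proposition 2 already yields $\lim \tfrac{1}{N}\log \E Z'_N = A^A(\beta,\lambda)$, while Lemma \ref{lann1} bounds the normalized second moment $\E(Z'_N)^2/(\E Z'_N)^2$ uniformly by $(1-\beta_\lambda^2)^{-1/2}$ as long as $\beta_\lambda<1$. Applying Chebyshev to the ratio $Z'_N/\E Z'_N$, for any $\epsilon>0$ the events $\{|Z'_N/\E Z'_N - 1|>\epsilon\}$ have summable probabilities along a sufficiently sparse subsequence $N_k$; Borel--Cantelli then gives $\tfrac{1}{N_k}\log Z'_{N_k}\to A^A$ almost surely, and standard concentration (together with the monotonicity/superadditivity already proved) upgrades this to convergence of the expectation along the full sequence.

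The main obstacle is thus Lemma \ref{lann1}. Computing $\E_J (Z'_N)^2$ by Gaussian integration over two replicas produces an effective Hamiltonian in the variables $(z^{(1)}_i,z^{(2)}_i)$ whose quadratic form is positive definite precisely when $\beta_\lambda<1$, with Gaussian determinant yielding the advertised factor $(1-\beta_\lambda^2)^{-1/2}$; one still has to verify that the quartic regularization term and the neglected diagonal corrections contribute only subexponentially in $N$, and this is where the bulk of the technical estimates live. The divergence of the second-moment bound at $\beta_\lambda=1$ is the natural ergodicity-breaking threshold, and it marks the exact boundary of the region in which the theorem can hold.
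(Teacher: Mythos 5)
Your overall architecture is exactly the paper's: the upper bound from Jensen's inequality (\ref{annleq}) together with Proposition 1, the lower bound by passing to the diagonal-free partition function $Z'_N$ via (\ref{anngeq}), Proposition 2 for its annealed value, and the second-moment estimate of Lemma \ref{lann1} as the engine behind Lemma \ref{ann}. Your sketch of Lemma \ref{lann1} (two-replica Gaussian integration, positivity of the quadratic form for $\b_\l<1$, determinant factor $(1-\b_\l^2)^{-1/2}$, with the quartic regularization and diagonal corrections contributing subexponentially) is also the computation the paper carries out, via linearization of the overlap term by an auxiliary Gaussian.

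There is, however, a genuine gap in your deduction of Lemma \ref{ann} from Lemma \ref{lann1}. The bound $\E[(Z'_N)^2]/\E^2(Z'_N)\leq (1-\b_\l^2)^{-1/2}$ controls the variance of $Z'_N/\E Z'_N$ only by a constant that does \emph{not} vanish with $N$, so Chebyshev gives $P(|Z'_N/\E Z'_N-1|>\epsi)\leq C\epsi^{-2}$ uniformly in $N$; these probabilities are not summable along any subsequence, sparse or otherwise, so Borel--Cantelli cannot be applied as you state it. Indeed the conclusion $Z'_N/\E Z'_N\to 1$ almost surely is generally false: in the high-temperature phase this ratio typically converges in law to a nondegenerate limit. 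What the second-moment method actually delivers, and all that is needed, is the far weaker statement $\frac1N\log(Z'_N/\E Z'_N)\to 0$. The upper tail $P(Z'_N\geq e^{N\epsi}\,\E Z'_N)\leq e^{-N\epsi}$ follows from Markov on the first moment; for the lower tail one uses Paley--Zygmund, $P\bigl(Z'_N\geq \tfrac12\E Z'_N\bigr)\geq \tfrac14\,\E^2(Z'_N)/\E[(Z'_N)^2]\geq c>0$, combined with exponential concentration of $\frac1N\log Z'_N$ about its mean, to force $\frac1N\E\log Z'_N\geq \frac1N\log\E Z'_N-o(1)$. This is the content of the ``Borel--Cantelli approach'' of the references the paper leans on (and it is also where one must check that Gaussian concentration survives the unbounded spins, which is what the quartic regularization buys). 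With that substitution your proof coincides with the paper's.
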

 \nin Now we attack Lemma \ref{lann1}. 
 \begin{proof}
 At first we evaluate $\E(Z_N^{'2})$. By a straightforward calculation we have
 \bea\label{z2}
 \E(Z_N^{'2})\leq\E_{1,2}\exp\left(-\frac{\beta^2}{4N}\sum_{i=1}^N{z^{(1)}_i}^4+{z^{(2)}_i}^4+
 \frac{\lambda}{2}\sum_{i=1}^N{z^{(1)}_i}^2+{z^{(2)}_i}^2
 +\frac{\beta^2N}{2}q_{12}^2\right),\nn
 \eea
 where we neglected a term $e^{-\frac{\beta^2}{2N}\sum_iz^{(1)^2}_iz^{(2)^2}_i}<1$. We can linearize the overlap term introducing an auxiliary
 $\mathcal{N}(0,1)$ gaussian random variable $g$ so $e^{\frac{\beta^2N}{2}q_{12}^2}=\E_ge^{\beta\sqrt{N}q_{12}g}$. In this way the average
 factorizes on $i$ so, bearing in mind $(\ref{z'2})$ and the overlap's definition, we can write
 \bea
 \frac{\E(Z_N^{'2})}{\E^2(Z'_N)}&\leq&\E_g\left(\frac{\E_{x,y}e^{-\frac{\beta^2}{4N}(x^4+y^4)+\frac{\lambda}{2}(x^2+y^2)+
 \frac{\beta}{\sqrt{N}}xyg}}{\mathbb{E}_{x,y}e^{-\frac{\beta^2}{4N}(x^4+y^4)+\frac{\lambda}{2}(x^2+y^2)}}\right)^N\nn\\
 &=&\E_g\left(\frac{\E_{x,y}e^{-\frac{\beta_{\lambda}^2}{4N}(x^4+y^4)+\frac{\beta_{\lambda}}{\sqrt{N}}xyg}}
 {\mathbb{E}_{x,y}e^{-\frac{\beta_{\lambda}^2}{4N}(x^4+y^4)}}\right)^N,
 \eea
where we used the simpler notation $x=z^{(1)}_1$, $y=z^{(2)}_1$. Now it is sufficient to note that
 \bea
 \mathbb{E}_{x,y}e^{-\frac{\beta_{\lambda}^2}{4N}(x^4+y^4)+\frac{\beta_{\lambda}}{\sqrt{N}}xyg}&=&
 \E_x e^{-\frac{\beta^2_{\lambda}}{4N}x^4+\frac{\beta_{\lambda}^2}{2N}g^2x^2}\int\frac{dy}{\sqrt{2\pi}}
 e^{-\frac{1}{2}(y-\frac{\beta_{\lambda}}{\sqrt{N}}xg)^2-\frac{\beta^2_{\lambda}}{4N}y^4}\nn\\
 &=&\E_x e^{-\frac{\beta^2_{\lambda}}{4N}x^4+\frac{\beta_{\lambda}^2}{2N}g^2x^2}\E_y e^{-\frac{\beta^2_{\lambda}}{4N}(y+\frac{\beta_{\lambda}}{\sqrt{N}}xg)^4}\nn\\
 &\leq&\E_x e^{-\frac{\beta^2_{\lambda}}{4N}x^4+\frac{\beta_{\lambda}^2}{2N}g^2x^2}\E_y e^{-\frac{\beta^2_{\lambda}}{4N}y^4}\nn,
 \eea
 since the function $\E_y[e^{-(y+a)^4}]$ is concave in $a$, and it exhibits a unique maximum in $a=0$, as it can be easily verified. Hence we finally get
 \be
 \frac{\E(Z_N^{'2})}{\E^2(Z'_N)}\leq\E_g\left[ \frac{\int \frac{dx}{\sqrt{2\pi}}e^{-\frac{x^2}{2}} e^{\frac{\b^2_\l x^2g^2}{2N}} e^{-\frac{\b^2_\l x^4}{4N} }}{ \int \frac{dx}{\sqrt{2\pi}}e^{-\frac{x^2}{2}} e^{-\frac{\b^2_\l x^4}{4N} }} \right]^N,
 \ee
 and we can split this integral as a sum over the complementary regions $\{g^2<N/\b^2_\l\}$ and $\{g^2\geq N/\b^2_\l\}$. For the first one, we see that 
 \bea
& & \lim_{N\rightarrow\infty}\int_{g^2<N/\b^2_\l}\frac{dg}{\sqrt{2\pi}}e^{-\frac{g^2}{2}}\left[ \frac{\int \frac{dx}{\sqrt{2\pi}}e^{-\frac{x^2}{2}} e^{\frac{\b^2_\l x^2g^2}{2N}} e^{-\frac{\b^2_\l x^4}{4N} }}{ \int \frac{dx}{\sqrt{2\pi}}e^{-\frac{x^2}{2}} e^{-\frac{\b^2_\l x^4}{4N} }} \right]^N\nn\\
&=&\lim_{N\rightarrow\infty}\int_{g^2<N/\b^2_\l}\frac{dg}{\sqrt{2\pi}}e^{-\frac{g^2}{2}}(1-\frac{\beta_{\lambda}^2}{N}g^2)^{-\frac{N}{2}}\left(\frac{\int \frac{dx}{2\pi}e^{-\frac{1}{2}x^2-\frac{\beta^2_{\lambda}}{4N(1-\frac{\beta_{\lambda}^2}{2N}g^2)^2}x^4}}{\int\frac{dx}{2\pi}e^{-\frac{1}{2}x^2-\frac{\beta^2_{\lambda}}{4N}x^4}}\right)^N\nn\\
&\leq&\lim_{N\rightarrow\infty} \int_{g^2<N/\b^2_\l}\frac{dg}{\sqrt{2\pi}}e^{-\frac{g^2}{2}}\left( 1-\frac{\b^2_\l g^2}{N} \right)^{-\frac{N}{2}}\nn\\
&\leq& \frac{1}{\sqrt{1-\b^2_\l}}.
 \eea
For the second one, we cannot perform the change of variable for the variance of the gaussian, and we have to estimate the integrals. As we have seen, the integral in the denominator is given by (\ref{eq:besselK}). Thus we can rewrite the second term as
$$
\int_{g^2\geq N/\b^2_\l}\frac{dg}{\sqrt{2\pi}}e^{-\frac{g^2}{4}}\frac{e^{-g^2/4}\left(\int \frac{dx}{\sqrt{2\pi}}e^{-\frac{x^2}{2}} e^{\frac{\b^2_\l x^2g^2}{2N}} e^{-\frac{\b^2_\l x^4}{4N} }\right)^N}{ \left(\int \frac{dx}{\sqrt{2\pi}}e^{-\frac{x^2}{2}} e^{-\frac{\b^2_\l x^4}{4N} }\right)^N},
$$
and notice that the function at the numerator in the integral is concave in $g$, and it assumes the unique maximum point at $g=0$, where it attains the same value of the denominator. Therefore we get the super-exponential decay of the gaussian tails:
\bea
& &\int_{g^2\geq N/\b^2_\l}\frac{dg}{\sqrt{2\pi}}e^{-\frac{g^2}{4}}\frac{e^{-g^2/4}F_N^N(g,\b_\l)}{ \left(\int \frac{dx}{\sqrt{2\pi}}e^{-\frac{x^2}{2}} e^{-\frac{\b^2_\l x^4}{4N} }\right)^N}\nn\\
&\leq&\sqrt{2}P(g^2\geq N/\b^2_\l)\simeq C\sqrt{\frac{\b^2_\l}{N}}e^{-\frac{N}{4\b^2_\l}},
\eea
for a certain constant $C$, and the lemma is proven.
\end{proof}

\section{Sum Rules for the Free Energy}
 In this section we introduce the replica symmetric approximation for the free energy density. In particular, we obtain it as an upper bound for $-\beta f$
 togheter with the error, in the form of a sum rule. For this purpose, we apply a well known interpolation scheme \cite{sum-rules}\cite{io1}\cite{Bip} \cite{NN} to compare the
 original two-body interaction with a one-body interaction system. Concretely, we define, for $t\in[0,1]$ and $\bar{q}\geq 0$\footnote{despite it will be transparent at the end of the section, it may result helpful to bear in mind that $\bar{q}$ will act as the replica symmetric approximation of the overlap.}, the interpolating partition function
 \bea\label{rs1}
 Z_N(t,J,J')&=&\E_z\exp\left(\beta\sqrt{\frac{t}{2N}}\sum_{i,j=1}^N J_{ij}z_iz_j -t\frac{\beta^2}{4N}(\sum_{i=1}^N z_i^2)^2\right.\nn\\
 &&\left.+\beta\sqrt{1-t}\sqrt{\bar{q}}\sum_{i=1}^NJ'_iz_i +(1-t)\frac{c}{2}\sum_{i=1}^Nz_i^2\right)\nn\\
 &&\exp\left(\frac{\lambda}{2}\sum_{i=1}^Nz_i^2\right),
 \eea
where the external noise $J'_i$ are i.i.d Gaussian random variable $\mathcal{N}(0,1)$ and they are also independent from all $J_{ij}$. Here $c$ is an additional lagrangian multiplier to be fixed later. Now we introduce the interpolating function
 \be\label{interp1}
 \varphi_N(t)=\frac{1}{N}\E\log Z_N(t,J,J'),
 \ee
 where we encode in $\E$ the averages respect to both $J$ and $J'$. At $t=1$ the interpolating function (\ref{interp1})  recovers the original system, while at $t=0$ it accounts for a simpler factorized one-body
 model and we can easily get
 \bea
 \varphi_N(0)&=&\frac{1}{N}\E\log\prod_{i=1}^N\E_{z_i}\exp\left(\beta\sqrt{\bar{q}}J'_iz_i + \frac{(c+\l)}{2}z_i^2\right)\nn\\
 &=&\E_{J'}\log\E_g\exp\left(\beta\sqrt{\bar{q}}J'g +\frac{(c+\l)}{2}g^2\right)\nn\\
 &=& \E_{J'}\log(1-\l-c)^{-\frac{1}{2}}\E_g\exp\left(\frac{\beta\sqrt{\bar{q}}}{(1-\l-c)^{\frac{1}{2}}}J'g \right)\nn\\
 &=& \log(\sigma)+\frac{1}{2}\E_{J'}\beta^2\bar{q}\sigma^2J'^2=\log(\sigma)+\frac{1}{2}\beta^2\bar{q}\sigma^2,
 \eea
with $\sigma=(1-\l-c)^{-\frac{1}{2}}$ and where $g$ as usual is a $\mathcal{N}(0,1)$ random variable. Therefore $\varphi_N(t)$ fulfills the following boundary conditions:
 \bea
 \varphi_N(1)&=&A_N(\beta,\lambda)\nn\\
 \varphi_N(0)&=&\log(\sigma)+\frac{1}{2}\beta^2\bar{q}\sigma^2.
 \eea
 Now we have to evaluate the $t-$derivative of $\varphi_N(t)$ in order to obtain the sum rule
 \be
 \varphi_N(1)=\varphi_N(0)+\int_0^1 dt\frac{d}{dt} \varphi_N(t).
 \ee
Using the notation $\meanv{.}_t=\E\Omega_t(.)$, where $\Omega_t(.)$ is the replicated Boltzmann state encoded in the partition function $(\ref{rs1})$, we can write
 \bea\label{rs2}
 \frac{d}{dt}\varphi_N(t)&=&\frac{1}{N}\frac{\beta}{2\sqrt{2tN}}\sum_{i,j=1}^N\E J_{ij}\omega_t(z_iz_j)-\frac{\beta^2}{4N^2}\meanv{(\sum_{i=1}^Nz_i^2)^2}_t\nn\\
 &-&\frac{1}{N}\frac{\beta\sqrt{\bar{q}}}{2\sqrt{1-t}}\sum_{i=1}^N\E J'_i\omega_t(z_i)-\frac{c}{2N}\sum_{i=1}^N\meanv{z_i^2}_t.
 \eea
 A standard integration by parts over the external noise, as in $(\ref{ew})$, shows that
 \bea\label{rs3}
 \frac{1}{N}\frac{\beta}{2\sqrt{2tN}}\sum_{i,j=1}^N\E J_{ij}\omega_t(z_iz_j)&=&
 \frac{\beta^2}{4N^2}\meanv{(\sum_{i=1}^Nz_i^2)^2}_t-\frac{\beta^2}{4}\meanv{q_{12}^2}_t\nn\\
 \frac{1}{N}\frac{\beta\sqrt{\bar{q}}}{2\sqrt{1-t}}\sum_{i=1}^N\E J'_i\omega_t(z_i)&=&-\frac{\beta^2}{2}\bar{q}\meanv{q_{12}}_t+
 \frac{\beta^2\bar{q}}{2N}\sum_{i=1}^N\meanv{z_i^2}_t.
 \eea
 Inserting $(\ref{rs3})$ into $(\ref{rs2})$, we get
 \be
 \frac{d}{dt}\varphi_N(t)=-\frac{\beta^2}{4}\meanv{q_{12}^2}_t+\frac{\beta^2}{2}\bar{q}\meanv{q_{12}}_t
 +\frac{1}{2N}\sum_{i=1}^N(-\beta^2\bar{q}-c)\meanv{z_i^2}_t,
 \ee
 hence, adding and subtracting a term $\frac{\beta^2}{4}\bar{q}^2$ and with the choice $c=-\beta^2\bar{q}$, we finally obtain
 \be
 \frac{d}{dt}\varphi_N(t)=\frac{\beta^2}{4}\bar{q}^2-\frac{\beta^2}{4}\meanv{(q_{12}-\bar{q})^2}_t.
 \ee
 We have just proved the following

 \begin{teorema}\label{rslem}
For every $\bar q\in\mathcal{D}_{\beta,\lambda}(\bar{q})\equiv
 \left\{\bar{q}\in\R^{+} : 1-\lambda+\beta^2\bar{q}>0\right\}$ is defined
 \be\label{rs5}
 \tilde{A}(\beta,\lambda,\bar{q})=\log(\sigma)+\frac{1}{2}\beta^2\bar{q}\sigma^2+\frac{\beta^2}{4}\bar{q}^2
 \ee
with $\sigma=(1-\lambda+\beta^2\bar{q})^{-\frac{1}{2}}$. Then, $\forall N$ and $\forall\bar{q}\in\mathcal{D}_{\beta,\lambda}(\bar{q}) $,
 the quenched free energy of the mean field gaussian spin glass model defined in $(\ref{e2})$ fulfills the sum rule
 \be\label{rs'6}
 A_N(\beta,\lambda)=-\beta f_N(\beta,\lambda)=\tilde{A}(\beta,\lambda,\bar{q})-\frac{\beta^2}{4}\int_0^1dt\meanv{(q_{12}-\bar{q})^2}_t.
 \ee
 Moreover, $\forall\bar{q}\in\mathcal{D}_{\beta,\lambda}(\bar{q}) $, $\tilde{A}(\beta,\lambda,\bar{q})$ is an upper bound for $A_N(\beta,\lambda)$
 uniformly in $N$, i.e.
 \be\label{rs'5}
 A_N(\beta,\lambda)=-\beta f_N(\beta,\lambda)\leq\tilde{A}(\beta,\lambda,\bar{q}).
 \ee
 \end{teorema}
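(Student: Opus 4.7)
The plan is to prove both statements together via a Guerra--Toninelli style interpolation on the partition function $Z_N(t,J,J')$ already displayed in (\ref{rs1}). At $t=1$ the interpolating system coincides with the original model, so $\varphi_N(1)=A_N(\beta,\lambda)$. At $t=0$ the two-body $J_{ij}$ coupling is switched off and replaced by a decoupled one-body Gaussian field together with the quadratic self-term $\tfrac{1-t}{2}c z_i^2$; the $z$-integrals factorize over the sites and, after the eventual choice $c=-\beta^2\bar q$, the condition $\bar q\in\mathcal{D}_{\beta,\lambda}$ is exactly what ensures convergence of the resulting Gaussian integral, giving $\varphi_N(0)=\log\sigma+\tfrac{1}{2}\beta^2\bar q\sigma^2$ with $\sigma=(1-\lambda+\beta^2\bar q)^{-1/2}$.

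The bulk of the work is the computation of $\frac{d}{dt}\varphi_N(t)$. The derivative produces four contributions: two involve Boltzmann averages weighted by the Gaussian noises $J_{ij}$ and $J'_i$, and two are explicit. The noise-weighted pieces are handled by Gaussian integration by parts, exactly as in (\ref{ew}). The $J_{ij}$ piece splits into a single-replica part $+\tfrac{\beta^2}{4N^2}\meanv{(\sum_i z_i^2)^2}_t$, which cancels the explicit $-\tfrac{\beta^2}{4N^2}\meanv{(\sum_i z_i^2)^2}_t$ coming from differentiating the regularizing quartic term, and a genuine two-replica part $-\tfrac{\beta^2}{4}\meanv{q_{12}^2}_t$. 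The $J'_i$ piece likewise yields $+\tfrac{\beta^2}{2}\bar q\meanv{q_{12}}_t$ plus a single-replica piece $\tfrac{\beta^2\bar q}{2N}\sum_i\meanv{z_i^2}_t$.

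Choosing the Lagrange multiplier $c=-\beta^2\bar q$ then cancels the remaining single-replica contributions, leaving
\be
\frac{d}{dt}\varphi_N(t)=-\frac{\beta^2}{4}\meanv{q_{12}^2}_t+\frac{\beta^2}{2}\bar q\meanv{q_{12}}_t.
\ee
Adding and subtracting $\tfrac{\beta^2}{4}\bar q^2$ to complete the square gives
\be
\frac{d}{dt}\varphi_N(t)=\frac{\beta^2}{4}\bar q^2-\frac{\beta^2}{4}\meanv{(q_{12}-\bar q)^2}_t,
\ee
and integrating over $t\in[0,1]$ together with the boundary values produces the sum rule (\ref{rs'6}). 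Dropping the manifestly non-negative error term yields the upper bound (\ref{rs'5}).

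I expect the only delicate point to be the Gaussian integration by parts on the $J_{ij}$ term: one must keep track of the diagonal $i=j$ contributions carefully to verify that the $\meanv{(\sum_i z_i^2)^2}_t$ pieces cancel cleanly against the explicit regularizer and that what survives is exactly $-\tfrac{\beta^2}{4}\meanv{q_{12}^2}_t$. Beyond this, smoothness of $\varphi_N(t)$ on $[0,1]$ and the legitimacy of the $t$-differentiation are guaranteed by the regularizing quartic term together with the domain condition $\bar q\in\mathcal{D}_{\beta,\lambda}$.
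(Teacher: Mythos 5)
Your proposal is correct and follows essentially the same route as the paper: the same interpolating partition function (\ref{rs1}), the same Gaussian integration by parts producing the cancellation of the $\meanv{(\sum_i z_i^2)^2}_t$ terms against the regularizer, the same choice $c=-\beta^2\bar q$ to kill the single-replica remainder, and the same completion of the square followed by integration in $t$. The boundary computation at $t=0$ and the role of the condition $1-\lambda+\beta^2\bar q>0$ for convergence of the factorized Gaussian integral also match the paper's argument.
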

 \nin Since the bound $(\ref{rs'5})$ is uniform in $N$, then it is true also in the thermodynamic limit. The error term in $(\ref{rs'6})$ reduces to the overlap's fluctuations around $\bar{q}$. We can minimize this error, or equivalently
 optimize the estimate in $(\ref{rs'5})$, by taking the value of $\bar{q}$ that minimize $\tilde{A}(\beta,\lambda,\bar{q})$. For this purpose we state the following

 \begin{proposizione}\label{rsprop}
 We have
 $$
\frac{\partial}{\partial\bar{q}}\tilde{A}(\beta,\lambda,\bar{q})=0\quad\Leftrightarrow\quad\bar{q}=0\quad\mbox{or}\quad\bar{q}=\frac{\beta-(1-\l)}{\beta^2}.
 $$
\nin The solution $\bar{q}=0$ is a minimum for $\b_\l<1$, i.e. $\beta\leq1-\lambda)$. Conversely, for $\beta>1-\lambda$ the minimum is $\bar{q}=\frac{\beta-(1-\l)}{\beta^2}$.
 \end{proposizione}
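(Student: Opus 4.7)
The plan is a direct calculus exercise: set $\partial_{\bar{q}}\tilde{A}=0$, solve on $\mathcal{D}_{\beta,\lambda}(\bar{q})$, and then apply a second-derivative test. To keep the bookkeeping clean I would substitute $u := 1-\lambda+\beta^{2}\bar{q} = \sigma^{-2}$, so that
\[
\tilde{A}(\beta,\lambda,\bar{q}) = -\tfrac{1}{2}\log u + \tfrac{\beta^{2}\bar{q}}{2u} + \tfrac{\beta^{2}\bar{q}^{2}}{4}.
\]
Differentiating with respect to $\bar{q}$ and using $du/d\bar{q} = \beta^{2}$, the contribution $-\beta^{2}/(2u)$ coming from $-\tfrac{1}{2}\log u$ cancels the $+\beta^{2}/(2u)$ produced by the product rule on $\tfrac{\beta^{2}\bar{q}}{2u}$. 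Only the $\bar{q}$-dependent pieces survive, collapsing to the factored form
\[
\frac{\partial \tilde{A}}{\partial\bar{q}} = \frac{\beta^{2}\bar{q}}{2}\left(1 - \frac{\beta^{2}}{u^{2}}\right).
\]
On $\mathcal{D}_{\beta,\lambda}(\bar q)$ one has $u>0$, so this vanishes iff $\bar{q}=0$ or $u=\beta$, i.e.\ $\bar{q}=(\beta-(1-\lambda))/\beta^{2}$, which recovers both candidate stationary points.

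Next I would classify the critical points. Differentiating once more yields
\[
\frac{\partial^{2}\tilde{A}}{\partial\bar{q}^{2}} = \frac{\beta^{2}}{2}\left(1-\frac{\beta^{2}}{u^{2}}\right) + \frac{\beta^{6}\bar{q}}{u^{3}}.
\]
At $\bar{q}=0$ (where $u=1-\lambda$) the second term disappears and we are left with $\tfrac{\beta^{2}}{2}\bigl(1-\beta^{2}/(1-\lambda)^{2}\bigr)$, which is strictly positive exactly when $\beta<1-\lambda$, i.e.\ $\beta_{\lambda}<1$; hence the trivial solution is a local minimum precisely in the annealed region. At the non-trivial point $\bar{q}^{*}=(\beta-(1-\lambda))/\beta^{2}$, admissibility $\bar{q}^{*}\geq 0$ forces $\beta\geq 1-\lambda$, and plugging $u=\beta$ kills the first term and leaves $\beta^{3}\bar{q}^{*}$, which is strictly positive for $\beta>1-\lambda$. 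So $\bar{q}^{*}$ is a local minimum in the spin-glass region, matching the claim.

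Finally I would upgrade ``local'' to ``global'' on $\mathcal{D}_{\beta,\lambda}(\bar q)$. Since $1-\beta^{2}/u^{2}$ is monotone increasing in $\bar{q}$ (through $u$), the factored expression for $\partial_{\bar{q}}\tilde{A}$ has constant sign on each side of the unique interior zero: for $\beta>1-\lambda$ the derivative is negative on $(0,\bar{q}^{*})$ and positive on $(\bar{q}^{*},\infty)$, pinpointing $\bar{q}^{*}$ as the global minimum; for $\beta\leq 1-\lambda$ the derivative is nonnegative throughout, pinpointing $\bar{q}=0$. The quadratic term $\beta^{2}\bar{q}^{2}/4$ ensures $\tilde{A}\to+\infty$ as $\bar{q}\to\infty$, ruling out escape to infinity. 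There is no real obstacle here beyond careful bookkeeping, in particular tracking the cancellation that produces the clean factored form for $\partial_{\bar{q}}\tilde{A}$ and honoring the domain constraint $u>0$ when taking the square root $u=\beta$.
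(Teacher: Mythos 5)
Your proof is correct and follows essentially the same route as the paper: differentiate to obtain the factored form $\frac{\beta^{2}\bar q}{2}\bigl(1-\beta^{2}\sigma^{4}\bigr)$, read off the two stationary points, and then use the monotonicity of $1-\beta^{2}/u^{2}$ in $\bar q$ to decide which one is the global minimum (the paper phrases this as convexity of $\tilde A$ in $\bar q^{2}$, which is the same observation). The substitution $u=\sigma^{-2}$ and the explicit second-derivative test are only cosmetic variations.
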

 \begin{proof}
 Since $\partial_{\bar q}\sigma=-\frac{\b^2}{2}\s^3$ we have that
 \bea
 \frac{\partial}{\partial\bar{q}}\tilde{A}(\beta,\lambda,\bar{q})
 &=&-\frac{\b^2}{2}\s^2+\frac{\b^2}{2}\s^2+\frac{\b^4}{2}\bar q\s^4+\frac{\b^2}{2}\bar q\nn\\
 &=&\frac{\b^2}{2}\bar q\left(1-\b^2\s^4\right)\nn
 \eea
 with two roots $\bar q =0$ and $\b\s^2=1$, i.e. $\bar q=\frac{\beta-(1-\l)}{\beta^2}$. If we study $\tilde{A}(\beta,\lambda,\bar{q})$ as a function of $\bar{q}^2$ we see that
 \bea
 \frac{\partial}{\partial\bar{q}^2}\tilde{A}(\beta,\lambda,\bar{q})&=&\frac{1}{2\bar q}\frac{\partial}{\partial\bar{q}}\tilde{A}(\beta,\lambda,\bar{q})=\frac{\beta^2}{4}\left(1-\frac{\b^2}{(1-\l+\b^2\bar q)^2}\right)\nn.
 \eea
 Since $\frac{\partial}{\partial\bar{q}^2}\tilde{A}(\beta,\lambda,\bar{q})$ is increasing, $\tilde A$ is a convex function of $\bar{q}^2$ and at $\bar q=0$ we have that
 $$
 \frac{\partial}{\partial\bar{q}^2}\tilde{A}(\beta,\lambda,\bar{q}^2)|_{\bar q =0}=\frac{\b^2}{4}\left(1-\frac{\b^2}{(1-\l)^2}\right)
 =\frac{\b^2}{4}\left(1-\b_\l^2\right).
 $$
Due to the convexity of $\tilde A(\bar{q}^2)$, the minimum is achieved at $\bar q =0$ for $\b_\l<1$ and at $\bar q>0$ for $\b_\l>1$.
 \end{proof}
 \nin By combining the information of Theorem $\ref{rslem}$ and Proposition $\ref{rsprop}$ we have the proof of the following main result.
 \begin{teorema}\label{RSteor}
 The replica symmetric approximation for the free energy is well defined by the following variational principle:
 \be
 A^{RS}(\beta,\lambda)=\inf_{\bar{q}\in\mathcal{D}_{\beta,\lambda}(\bar{q})}\tilde{A}(\beta,\lambda,\bar{q}),
 \ee
 where
 \be
 \tilde{A}(\beta,\lambda,\bar{q})=\log(\sigma)+\frac{1}{2}\beta^2\bar{q}\sigma^2+\frac{\beta^2}{4}\bar{q}^2,
 \ee
 with $\sigma(\beta,\lambda,\bar{q})$ defined in $(\ref{rs5})$.  The minimum is achieved at $\bar{q}=0$ for $\beta\leq 1-\lambda$ and at
 $\bar{q}=\frac{\beta-(1-\l)}{\beta^2}$ otherwise. Moreover the replica symmetric approximation is an upper bound for $A(\beta,\lambda)$, infact,
 uniformly in $N$,
 \be\label{rs6}
 A_N(\beta,\lambda)=-\beta f_N(\beta,\lambda)\leq A^{RS}(\beta,\lambda).
 \ee
 \end{teorema}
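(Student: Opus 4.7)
The proof plan is to assemble two ingredients already in hand: the uniform sum rule of Theorem \ref{rslem} and the critical-point analysis of Proposition \ref{rsprop}. The first step is to observe that the bound $A_N(\beta,\lambda)\leq \tilde A(\beta,\lambda,\bar q)$ supplied by Theorem \ref{rslem} holds pointwise in $N$ for every $\bar q\in\mathcal D_{\beta,\lambda}$. Since the left-hand side does not depend on $\bar q$, one may take the infimum of the right-hand side over the admissible set and still preserve the inequality; calling this infimum $A^{RS}(\beta,\lambda)$ one immediately obtains the desired uniform bound $A_N(\beta,\lambda)\leq A^{RS}(\beta,\lambda)$. This step is essentially free; the real content of the theorem is the explicit localization of the minimizer.

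For the second step I would simply promote the stationary point computation of Proposition \ref{rsprop} to a genuine global-minimum statement. The proposition already identifies the two critical points, namely $\bar q=0$ and $\bar q_{*}=(\beta-(1-\lambda))/\beta^{2}$, and shows that $\tilde A$ is convex when regarded as a function of $\bar q^{2}$ on $\mathcal D_{\beta,\lambda}$, because $\partial_{\bar q^{2}}\tilde A=\tfrac{\beta^{2}}{4}\bigl(1-\beta^{2}/(1-\lambda+\beta^{2}\bar q)^{2}\bigr)$ is increasing in $\bar q^{2}$. The sign of this derivative at the origin equals $\tfrac{\beta^{2}}{4}(1-\beta_{\lambda}^{2})$, so in the high-temperature regime $\beta\leq 1-\lambda$ the function $\tilde A$ is nondecreasing in $\bar q^{2}$ and its infimum on $\mathcal D_{\beta,\lambda}$ is attained at $\bar q=0$; in the low-temperature regime $\beta>1-\lambda$ the derivative is negative at the origin, so convexity in $\bar q^{2}$ forces the unique minimum to be at the other root $\bar q_{*}$.

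A small verification to carry out before closing is that $\bar q_{*}$ actually lies in $\mathcal D_{\beta,\lambda}$ when $\beta>1-\lambda$: plugging in gives $1-\lambda+\beta^{2}\bar q_{*}=\beta>0$, so $\sigma$ is well defined at $\bar q_{*}$ and the infimum is in fact a minimum. Combining this with the bound of the first paragraph yields the variational formula together with its uniform majorization of $A_{N}$, establishing (\ref{rs6}).

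I do not anticipate a serious obstacle. The only mildly delicate point is that convexity is in the variable $\bar q^{2}$ rather than in $\bar q$ itself; this is precisely what turns the stationarity condition $\bar q(1-\beta^{2}\sigma^{4})=0$ into a global dichotomy, and it is already the observation used in Proposition \ref{rsprop}. All the analytic work on the overlap fluctuations has been absorbed into Theorem \ref{rslem}, so the present theorem is really a packaging result.
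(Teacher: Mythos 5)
Your proposal is correct and follows essentially the same route as the paper, which likewise obtains the theorem by combining the uniform bound of Theorem \ref{rslem} with the critical-point and convexity analysis of Proposition \ref{rsprop}, and separately notes the check that $1-\lambda+\beta^{2}\bar q_{*}=\beta>0$ keeps the minimizer inside $\mathcal{D}_{\beta,\lambda}$. Nothing further is needed.
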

 \nin For $\beta_\l<1$ the replica symmetric free energy reduces to the annealed one, that, accordingly with Theorem $\ref{ann}$, coincides with the thermodynamic limit of the true free energy in such a region. Note that $\bar{q}=\frac{\beta-(1-\l)}{\beta^2}$ is also the optimal value for $\l>1$, infact $1-\l+\b^2\bar q=\b>0$ such that $\bar q \in \mathcal{D}_{\beta,\lambda}(\bar{q})$ and the RS approximation is well defined. In this case we see that $\bar q \rightarrow \infty$ when $\b\rightarrow 0$.

\section{Fluctuation Theory for the Order Parameter}

This section is dedicated to the study of the fluctuations of the (rescaled and centered) order parameter.

\nin The general idea behind is that critical phenomena arises in presence of a divergence of the fluctuation  of the order parameter of the model. As critical phenomena are interesting by themselves, this analysis deserve major depth. In particular, we want to bound the annealed region, namely where  $\bar{q}=\lim_{N \to \infty}\meanv{q^2_{12,N}}=0$, checking that the rescaled fluctuations of $\meanv{q_{12,N}}$ diverge on the same critical line where the fluctuation of the annealed free energy are singular. 

\nin To this task we introduce and define the rescaled and centered  overlap
\be
\xi_{12,N}=\sqrt{N}(q_{12,N}-\bar{q}),
\ee
which, in the thermodynamic limit, converges to a Gaussian random variable, whose variance spreads up to infinity as far as the system approaches the critical line in the $(\beta,\lambda)$ plane.
\newline
Once again the strategy we outline is the one developed for the Sherrington-Kirkpatrick model \cite{sum-rules}\cite{GTQ}\cite{GTF}. It is still based on the evaluation of overlap correlations at $t=0$ with respect to the \textit{Boltzmannfaktor} defined in (\ref{rs1}): We can in fact evaluate the thermodynamical observables at $t=0$ due to the lacking of correlation and then propagate the solution up to $t=1$.
\newline
To this task we introduce the following proposition
\begin{proposizione}\label{stream}
For every smooth function $F_s$ of the overlaps $\left\{q_{ab}\right\}_{1\leq a<b\leq s}$ among $s$ replicas,
\be
\frac{d}{dt}\meanv{F_s}_t=\frac{\beta^2}{2}\left(\sum_{1\leq a<b\leq s}\meanv{F_s \xi^2_{ab}}_t-s\sum_{a=1}^s\meanv{F_s \xi^2_{as+1}}_t +
\frac{s(s+1)}{2}\meanv{F_s \xi^2_{s+1s+2}}_t \right).
\nn\ee
\end{proposizione}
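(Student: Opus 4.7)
The plan is to follow the standard Guerra streaming calculation for Gaussian interpolations \cite{sum-rules,GTQ,GTF}, adapted to the regularized soft-spin setting $(\ref{rs1})$. First we write $\meanv{F_s}_t = \E\,\Omega_t(F_s)$ and differentiate inside $\E$: since $\Omega_t = \omega_t^{\otimes s}$ and $F_s$ depends only on the overlaps among the first $s$ replicas, the chain rule produces a sum over $a=1,\ldots,s$ of expectations of $F_s$ times the $t$-derivative of the interpolating exponent acting on replica $a$,
\begin{equation}
\frac{dH^{(a)}_t}{dt} = \frac{\beta}{2\sqrt{2tN}}\sum_{i,j=1}^N J_{ij}z_i^{(a)}z_j^{(a)} - \frac{\beta^2}{4N}\Big(\sum_i (z_i^{(a)})^2\Big)^{2} - \frac{\beta\sqrt{\bar q}}{2\sqrt{1-t}}\sum_i J'_i z_i^{(a)} - \frac{c}{2}\sum_i (z_i^{(a)})^2,
\end{equation}
together with the analogous contribution coming from differentiating the normalizations $Z_N(t)^{-1}$ that define $\Omega_t$.

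Second, we apply Gaussian integration by parts on $J_{ij}$ and $J'_i$, exactly as in $(\ref{ew})$ and $(\ref{rs3})$. Each Wick derivative $\partial_{J_{ij}}$ (resp.\ $\partial_{J'_i}$) either acts on the Boltzmann weight of one of the $s$ replicas attached to $F_s$, preserving its replica label, or acts on a normalization $Z_N(t)^{-1}$, which generates a new replica label. The two-body noise thus produces up to two new replicas $s+1,s+2$, while the one-body noise generates only a single new one. The $i=j$ diagonal of the two-body Wick expansion is cancelled, term by term, by the quartic regularization $-(\beta^2/4N)(\sum z_i^2)^2$, exactly as in the computation $(\ref{ew})$. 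The surviving pieces assemble into contributions of the form $\meanv{F_s\,q_{ab}^2}_t$ (from $J_{ij}$), $\meanv{F_s\,q_{ab}}_t$ (from $J'_i$) and $\meanv{F_s\,N^{-1}\sum_i (z_i^{(a)})^2}_t$ (from $c\sum z_i^2$), each appearing with the common combinatorial coefficients $+1$ when both indices lie in $\{1,\ldots,s\}$, $-s$ when exactly one index is new, and $+s(s+1)/2$ when both are new.

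Third, we invoke the identity $\binom{s}{2}-s^2+s(s+1)/2=0$, which eliminates the pure constant generated by $\bar q^2$, and we fix $c=-\beta^2\bar q$ as already done in Section 4. With this choice, the linear-in-$q$ contribution from $J'_i$ and the $c$-driven constant contribution combine with the two-body quadratic contribution to complete the square
\begin{equation}
q_{ab}^2-2\bar q\,q_{ab}+\bar q^2=(q_{ab}-\bar q)^2.
\end{equation}
Using the definition $\xi_{ab}=\sqrt{N}(q_{ab}-\bar q)$ one then writes $N(q_{ab}-\bar q)^2=\xi_{ab}^2$; this absorbs the factor of $N$ left over from the Wick sums $\sum_{i,j=1}^N z_i^{(a)}z_j^{(a)}z_i^{(b)}z_j^{(b)}=N^2 q_{ab}^2$, yielding exactly the coefficient $\beta^2/2$ claimed in the proposition.

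The main obstacle is the combinatorial bookkeeping in the second step: one must carefully separate which new replicas are produced by Wick differentiation acting on $F_s$ versus on the normalizations $Z_N(t)^{-1}$, and then verify that the quadratic, linear and constant overlap contributions share the common coefficient pattern $(+1,\,-s,\,+s(s+1)/2)$. Once this is established, the completion of the square and the rewriting in terms of the $\xi_{ab}$ variables are purely algebraic, and the proposition follows.
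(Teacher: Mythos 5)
Your outline is correct and is precisely the ``long but direct calculation'' that the paper declines to write out (the proof of Proposition \ref{stream} is omitted in the text), so there is no genuinely different route to compare against: differentiating $\E\,\Omega_t(F_s)$, integrating by parts on $J_{ij}$ and $J'_i$, obtaining the coefficient pattern $(+1,\,-s,\,+s(s+1)/2)$ according to how many replica labels are new, using $\binom{s}{2}-s^2+\frac{s(s+1)}{2}=0$ to kill the $\bar q^2$ constant, and choosing $c=-\beta^2\bar q$ to complete the square in $\xi_{ab}=\sqrt{N}(q_{ab}-\bar q)$ indeed produce the stated formula with the overall factor $\beta^2/2$. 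Two small imprecisions in your prose, neither of which affects the conclusion: what the quartic regularizer cancels is the replica-diagonal $a=b$ term $\sum_{i,j}\meanv{z_i^2z_j^2}=\meanv{(\sum_i z_i^2)^2}$ of the Wick expansion (exactly as displayed in (\ref{ew})), not the site-diagonal $i=j$; and the one-body noise $J'_i$ does generate terms with two new replicas --- the linear contribution $\meanv{F_s\,q_{s+1\,s+2}}_t$ with coefficient $s(s+1)/2$, which is needed to complete the square inside $\meanv{F_s\,\xi^2_{s+1\,s+2}}_t$ --- consistent with your own coefficient table in the following sentence but contradicting the claim that it ``generates only a single new one.''
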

\nin The proof is here omitted, since it can be easily obtained by long but direct calculation.
\nin  We are interested in considering $F_s=\xi^2_{12}$, such that
\be
\frac{d}{dt}\meanv{\xi_{12}^2}_t=\frac{\beta^2}{2}\left(\meanv{\xi^4_{12}}_t-4\meanv{\xi^2_{12}\xi_{13}^2}_t+3\meanv{\xi^2_{12}\xi^2_{34}}_t\right).
\ee
To understand how $\meanv{\xi_{12}^2}_t$ behaves we need to tackle even the other two correlation functions $\meanv{\xi_{12}\xi_{13}}_t$, and $\meanv{\xi_{12}\xi_{34}}_t$.
For the sake of simplicity, let us consider $t$ as a time and put
\be
A(t)=\meanv{\xi_{12}^2}_t,\ \ \ B(t)=\meanv{\xi_{12}\xi_{13}}_t,\ \ \ C(t)=\meanv{\xi_{12}\xi_{34}}_t.
\ee
Under the Gaussian \textit{Ansatz} for the high temperature behavior of $\xi_{ab}$ we can apply Wick theorem and, by using Proposition (\ref{stream}), we can construct the following dynamical system for $A(t)$, $B(t)$ and $C(t)$:
\bea
\dot{A}&=&\beta^2(A^2-4B^2+3C^2),\nn\\
\dot{B}&=&\beta^2(2AB-6BC+6C^2-2B^2),\nn\\
\dot{C}&=&\beta^2(\frac{1}{2}AC+4B^2-16BC+10C^2),\nn
\eea
which can be straightforwardly solved with the initial data
\bea
A(0)&=&\E\omega^2(z_i^2)-\bar{q}^2,\nn\\
B(0)&=&\E\omega(z_i^2)\omega^2(z_i)-\bar{q}^2,\nn\\
C(0)&=&\E\omega(z_i)^4-\bar{q}^2,\nn
\eea
where
\bea
\omega(z_i)&=&\frac{\E_z z e^{\beta\sqrt{\bar{q}}Jz+\frac{(c+\l)}{2}z^2}}{\E_z e^{\beta\sqrt{\bar{q}}Jz+\frac{(c+\l)}{2}z^2}}=\frac{1}{\beta\sqrt{\bar{q}}}\partial_J \log \E_z \exp(\beta\sqrt{\bar{q}}Jz+\frac{(c+\l)}{2}z^2)\nn\\
&=&\frac{1}{\beta\sqrt{\bar{q}}}\partial_J(\log(\sigma)+\frac{1}{2}\beta^2\sigma^2\bar{q}J^2)=\beta\sqrt{\bar{q}}\sigma^2J,
\eea
\bea
\omega(z_i^2)&=&\frac{\E_z z^2 e^{\beta\sqrt{\bar{q}}Jz+\frac{(c+\l)}{2}z^2}}{\E_z e^{\beta\sqrt{\bar{q}}Jz+\frac{(c+\l)}{2}z^2}}
=\frac{1}{\beta^2\bar{q}}\frac{\partial^2_J\E_z e^{\beta\sqrt{\bar{q}}Jz+\frac{(c+\l)}{2}z^2}}{\E_z e^{\beta\sqrt{\bar{q}}Jz+\frac{(c+\l)}{2}z^2}}\nn\\
&=&\frac{1}{\beta^2\bar{q}}\left(\partial^2_J(\log\E_z e^{\beta\sqrt{\bar{q}}Jz+\frac{(c+\l)}{2}z^2})
+(\partial_J\log\E_z e^{\beta\sqrt{\bar{q}}Jz+\frac{(c+\l)}{2}z^2})^2\right)\nn\\
&=&\sigma^2+\beta^2\bar{q}\sigma^4J^2.
\eea
As we are interested in finding criticality, the simplest procedure is approaching the critical line from the annealed regime, where $\bar{q}=0$. This further simplifies the initial conditions as
\bea
A(0)=\sigma^4,\ \ \ B(0)=C(0)=0.
\eea
So the solution of the dynamical system is trivial as $B(t)$ and $C(t)$ are identically zero, while $A(t)$  satisfies
\be
\dot{A}=\beta^2A^2,\ \ \ A(0)=\sigma^4,
\ee
whose solution is
\be
A(t)=\frac{1}{\sigma^{-4}-\beta^2t}.
\ee
Remembering that $\sigma=(1-\lambda+\beta^2\bar(q))^{-\frac{1}{2}}$, propagating up to $t=1$ we finally obtain
\be
\meanv{\xi_{12}^2}=A(1)=\frac{1}{(1-\lambda)^2-\beta^2},
\ee
that diverges when $\beta_{\lambda}=1$, i.e. $\beta= 1-\lambda$, in complete agreement with Theorem 3.

\section{Broken Replica Symmetry Bound}
In this section we go beyond the replica symmetric approximation and we show a different bound for the free energy density, that should in principle improve the previous one. First of
 all we introduce the convex space $\chi$ of functional order parameters $x$, as nondecreasing functions of the auxiliary variable $q$ in the $[0,1]$ interval,
 i.e.
 \be
  \chi \ni x : [0,Q] \ni q \rightarrow x(q)\in [0,1],
 \ee
 and we have to think $x(q)$ as a possible distribution function for the overlap. We will consider the case of piecewise constant functional order parameters,
 characterized by an integer $K$ and two sequence of numbers, $q_0, q_1,\ldots, q_K $ and $m_1,\ldots, m_K$, satisfying
 \be\label{ordpar}
 0=q_0\leq q_1\ldots\leq q_K=Q\ \ 0\leq m_1\ldots \leq m_K\leq 1,
 \ee
 such that $x(q)=m_i$ for $q\in[q_{i-1},q_i]$. It is useful to define also $m_0=0$ and $m_{K+1}=1$. The replica symmetric case correspond to $K=2$, $q_1=\bar{q}$,
 $m_1=0$ and $m_2=1$, where overlap selfaverages around $\bar{q}$; the case $K=3$, with two possible value ($q_1$ and $q_2$) for the overlap, is the first level
 of replica symmetry breaking, and so on. Now, following the interpolation scheme in \cite{RSB}, we consider a generic piecewise constant $x(q)$ and we introduce
 the interpolating partition function
 \bea \label{RSB1}
 \tilde Z_N(t;x(q))=\E_z&&\exp\left(\beta\sqrt{\frac{t}{2N}}\sum_{i,j=1}^NJ_{ij}z_iz_j-t\frac{\beta^2}{4N}(\sum_{i=1}^Nz_i^2)^2\right.\nn\\
 &&\left.+\beta\sqrt{1-t}\sum_{a=1}^K\sqrt{q_a-q_{a-1}}\sum_{i=1}^NJ^a_iz_i+(1-t)\frac{C}{2}\sum_{i=1}^Nz_i^2\right)\nn\\
 &&\exp\left(\beta h\sum_{i=1}^Nz_i+\frac{\lambda}{2}\sum_{i=1}^Nz_i^2\right),
 \eea
 where $t\in[0,1]$. Here we have introducted additional independent gaussian random variable $J_i^a \in\mathcal{N}(0,1)$, $a=1,\ldots,K$, $i=1,\ldots,N$.
 As in the
 previous section we will set $C$ in order to optimize the approximation. Let us call $\E_a$ the average with respect all the random variables $J_i^a$, $i=1,\ldots
 ,N$ and $\E_0$ the average with respect all the $J_{ij}$. We denote with $\E$ the average with respect to all $J$. Now we define recursively the random variables
 \be
 Z_K=\tilde Z_N; \ Z_{K-1}=(\E_KZ_K^{m_k})^{\frac{1}{m_k}};\ \ldots Z_0=(\E_1Z_1^{m_1})^{\frac{1}{m_1}},
 \ee
 where each $Z_a$ depends only on the external noise $J_{ij}$ and on the $J_i^b$ for $b\leq a$. Finally we define the auxiliary interpolating function
 \be\label{RSB2}
 \varphi_N(t;x(q))=\frac{1}{N}\E_0\log Z_0(t;x(q)),
 \ee
 that is completely averaged out with respect of all the external noises. Notice that, at $t=1$, we recover the original $A_N(\beta,\lambda)$, while, at
 $t=0$, we have a solvable one body interaction problem. Thus, we have the possibility to find an other sum rule for the free energy
 \be\label{sumrul}
 A_N(\beta,\lambda)=\varphi_N(t=0)+\int_0^1dt\frac{d}{dt}\varphi_N(t),
 \ee
 after calculating the $t$-derivative of $\varphi_N(t,x(q))$. For this purpose we need some additional definitios. Let us introduce the random variables
 \be \label{deff}
 f_a=\frac{Z_a^{m_a}}{\E_aZ_a^{m_a}}\ \ \ \ a=1,\ldots,K
 \ee
 and notice that they depend only on the $J^b_i$ for $b\leq a$ and they are normalized, $\E f_a=1$. Moreover we consider the t-dependent state $\omega$ associated
 to the \textit{Boltzmannfaktor} defined in $(\ref{RSB1})$ and its replicated $\Omega$. A very important rule is played by the following states $\tilde{\omega}_a$,
 with $a=1,\ldots, K$, and its replicated $\tilde{\Omega}_a$, defined as
 \be
 \tilde{\omega}_K(.)=\omega(.); \ \ \ \tilde{\omega}_a=\E_{a+1}\ldots\E_K(f_{a+1}\ldots f_K\omega(.)).
 \ee
 Finally we define the generalized $\meanv{.}_a$ average as
 \be
 \meanv{.}_a=\E(f_1\ldots f_a\tilde{\Omega}_a(.)).
 \ee
 The basic motivation for the introduction of an interpolating function like $\varphi(t;x(q))$ and the reason cause we tell about a broken replica symmetry
 bound, is the following
 \begin{teorema}\label{RSB3}
 The $t$-derivative of $\varphi_N(t)$, defined in $(\ref{RSB2})$, is given by
 \bea
 \frac{d}{dt}\varphi_N(t)&=&\frac{\beta^2}{4}\sum_{a=1}^K(m_{a+1}-m_a)q_a^2\nn\\
 &-&\frac{\beta^2}{4}\sum_{a=1}^K(m_{a+1}-m_a)\meanv{(q_{12}-q_a)^2}_a,
 \eea
 if we set the value $C=-\b^2\sum_{a=1}^K (q_a-q_{a-1})=-\beta^2Q$.
 \end{teorema}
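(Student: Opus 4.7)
The argument follows Guerra's broken replica symmetry interpolation scheme, adapted to the Gaussian-spin regularisation. The plan is to differentiate $\varphi_N(t)=N^{-1}\E_0\log Z_0$ in $t$, propagate the derivative through the recursive tower $Z_K,\ldots,Z_0$, apply Gaussian integration by parts to the noises $J_{ij}$ and $J_i^a$, and finally complete the square in $(q_{12}-q_a)$. From the recursion $Z_{a-1}=(\E_aZ_a^{m_a})^{1/m_a}$ one checks at once that $\partial_y\log Z_{a-1}=\E_a(f_a\,\partial_y\log Z_a)$ for any parameter $y$. Iterating with $y=t$ yields
\begin{equation*}
\frac{\partial}{\partial t}\log Z_0=\E_1\bigl(f_1\E_2(f_2\cdots\E_K(f_K\,\omega_t(\partial_t\mathcal{H}_t))\cdots)\bigr),
\end{equation*}
where $\mathcal{H}_t$ is the $t$-dependent part of the exponent in (\ref{RSB1}). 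After applying $(1/N)\E_0$, $d\varphi_N/dt$ decomposes into four $\langle\cdot\rangle_K$-type contributions corresponding to (i) the $J_{ij}$ block, (ii) the quartic regularisation $-t\beta^2(\sum z_i^2)^2/(4N)$, (iii) the cavity-field blocks $J_i^a$ for $a=1,\ldots,K$, and (iv) the mass term $(1-t)C(\sum z_i^2)/2$.

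The $J_{ij}$ block is treated by Wick integration by parts exactly as in (\ref{ew}), except that $J_{ij}$ now also enters every $f_b$ through the recursion with $\partial_{J_{ij}}f_b\propto m_b(\tilde\omega_b(z_iz_j)-\tilde\omega_{b-1}(z_iz_j))f_b$. Summing over $(i,j)$ and identifying mixed $\tilde\omega$-products with two-replica correlators at the appropriate overlap level, one obtains after summation by parts on the $m_b$ the combination $-(\beta^2/4)\sum_b(m_{b+1}-m_b)\langle q_{12}^2\rangle_b$ plus a diagonal piece $(\beta^2/4N^2)\langle(\sum_iz_i^2)^2\rangle_K$ which is cancelled exactly by block (ii). The cavity-field blocks are handled analogously: using the identity $\partial_{J_i^a}\log Z_b=\beta\sqrt{(1-t)(q_a-q_{a-1})}\tilde\omega_b(z_i)$ for $b\ge a$ and $0$ otherwise, one finds $\partial_{J_i^a}f_b\propto m_b(\tilde\omega_b(z_i)-\tilde\omega_{b-1}(z_i))f_b$ for $b>a$ and $\partial_{J_i^a}f_a\propto m_a\tilde\omega_a(z_i)f_a$. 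Integrating by parts against the cavity prefactor $-\beta/(2\sqrt{1-t})\sqrt{q_a-q_{a-1}}$ from $\partial_t\mathcal{H}_t$, summing over $i$, and identifying products $\tilde\omega_b(z_i)\tilde\omega_c(z_i)$ with two-replica correlators at effective overlap level $\min(b,c)$, the contribution of (iii) splits into a linear combination of $\langle q_{12}\rangle_b$ and a diagonal piece proportional to $(1/N)\sum_i\langle z_i^2\rangle$.

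Interchanging the order of summation in $\sum_a\sum_{b\geq a}(q_a-q_{a-1})(\cdots)$ and using $\sum_{a=1}^b(q_a-q_{a-1})=q_b$ together with summation by parts on $(m_b)$ collapses the double sum into $(\beta^2/2)\sum_a(m_{a+1}-m_a)q_a\langle q_{12}\rangle_a+(\beta^2Q/2N)\sum_i\langle z_i^2\rangle$. The choice $C=-\beta^2Q=-\beta^2\sum_a(q_a-q_{a-1})$ kills the diagonal residue against contribution (iv), leaving
\begin{equation*}
\frac{d\varphi_N}{dt}=-\frac{\beta^2}{4}\sum_{a=1}^K(m_{a+1}-m_a)\langle q_{12}^2\rangle_a+\frac{\beta^2}{2}\sum_{a=1}^K(m_{a+1}-m_a)q_a\langle q_{12}\rangle_a.
\end{equation*}
Adding and subtracting $(\beta^2/4)\sum_a(m_{a+1}-m_a)q_a^2$ recognises the right hand side as $(\beta^2/4)\sum_a(m_{a+1}-m_a)q_a^2-(\beta^2/4)\sum_a(m_{a+1}-m_a)\langle(q_{12}-q_a)^2\rangle_a$, which is the claimed identity.

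The hardest part is the combinatorial bookkeeping in the two integration-by-parts steps: for each pair $(a,b)$ with $b\geq a$, one must identify the effective overlap level at which the IBP-produced correlator $\tilde\omega_b(\cdot)\tilde\omega_c(\cdot)$ is to be evaluated, and then reorganise the resulting double sum via summation by parts on $(m_b)$ so that only the successive differences $m_{b+1}-m_b$ survive. The unbounded Gaussian nature of the spins does not create an additional obstacle, since the quartic regularisation enters only through the $J_{ij}$ integration by parts and cancels cleanly; what remains is exactly the same bookkeeping as in Guerra's original argument for the Sherrington--Kirkpatrick model.
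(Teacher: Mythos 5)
Your proposal is correct and follows essentially the same route as the paper: iterate the recursion $Z_{a-1}^{-1}\partial_t Z_{a-1}=\E_a(f_a Z_a^{-1}\partial_t Z_a)$, integrate by parts on $J_{ij}$ and $J_i^a$ using the derivative formulas for $\omega$ and the $f_b$, cancel the diagonal terms via the quartic regularisation and the choice $C=-\beta^2 Q$, and complete the square. You actually spell out the summation-by-parts bookkeeping on $(m_b)$ and the collapse of the double sum $\sum_a\sum_{b\ge a}(q_a-q_{a-1})$ that the paper compresses into ``some straightforward calculations.''
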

 \begin{teorema}\label{RSB4}
 In the thermodynamic limit, for every  functional order parameter $x(q)$ of the type $(\ref{ordpar})$, the following sum rule holds
 \bea
 A(\beta,\lambda)&=&\varphi(0;x)+\frac{\beta^2}{4}\sum_{a=1}^K(m_{a+1}-m_a)q_a^2\nn\\
 &-&\frac{\beta^2}{4}\sum_{a=1}^K(m_{a+1}-m_a)\int_0^1\meanv{(q_{12}-q_a)^2}_adt
 \eea
 and, consequently, we have the following bound for the free energy density:
 \be
 -\beta f(\beta,\lambda)=A(\beta,\lambda)\leq\varphi(0;x)+\frac{\beta^2}{4}\sum_{a=1}^K(m_{a+1}-m_a)q_a^2.
 \ee
 \end{teorema}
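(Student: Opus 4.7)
The plan is to prove both the sum rule and the bound simultaneously, starting from Theorem \ref{RSB3} and working at finite $N$ first, then passing to the thermodynamic limit. The key observations are that the derivative formula integrates cleanly, that the error term has definite sign, and that the boundary value $\varphi_N(0;x)$ is in fact $N$-independent.

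First I would integrate the identity of Theorem \ref{RSB3} against $dt$ from $0$ to $1$. Since the quantity $\frac{\beta^2}{4}\sum_a(m_{a+1}-m_a)q_a^2$ does not depend on $t$, it factors out of the integral, and since $\varphi_N(1;x)=A_N(\beta,\lambda)$ by construction of (\ref{RSB1}), this yields the finite-volume sum rule
\[
A_N(\beta,\lambda)=\varphi_N(0;x)+\frac{\beta^2}{4}\sum_{a=1}^K(m_{a+1}-m_a)q_a^2-\frac{\beta^2}{4}\sum_{a=1}^K(m_{a+1}-m_a)\int_0^1\meanv{(q_{12}-q_a)^2}_a\,dt.
\]
The bound at finite $N$ is then immediate from a positivity argument: the constraints (\ref{ordpar}) together with the convention $m_{K+1}=1$ force $m_{a+1}-m_a\geq 0$ for every $a$, while $\meanv{(q_{12}-q_a)^2}_a\geq 0$ as the expectation of a square. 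Hence the last sum is non-negative and can only decrease $A_N$, giving
\[
A_N(\beta,\lambda)\leq\varphi_N(0;x)+\frac{\beta^2}{4}\sum_{a=1}^K(m_{a+1}-m_a)q_a^2.
\]

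To promote both statements to the thermodynamic limit I would use the fact that $\varphi_N(0;x)$ is independent of $N$: at $t=0$ the only surviving interaction terms in (\ref{RSB1}) are one-body in the sites, and because the $J^a_i$ are i.i.d.\ across $i$, the recursion $Z_{a-1}=(\E_a Z_a^{m_a})^{1/m_a}$ preserves site factorization through all levels. Thus $N^{-1}\E_0\log Z_0(0;x)$ reduces to a single-site expression $\varphi(0;x)$ for every $N$. Combined with the existence of $A(\beta,\lambda)=\lim_N A_N(\beta,\lambda)$, whose superadditivity proof of Section 2 carries over verbatim once the one-body $\lambda$ term is absorbed into the Gaussian measure, this yields the claimed sum rule and the bound in infinite volume.

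The only non-routine ingredient is Theorem \ref{RSB3} itself, where the choice $C=-\beta^2 Q$ is engineered precisely so that, after Gaussian integration by parts on the $J_{ij}$ and each family $J^a_i$, the diagonal $\sum_i z_i^2$-type contributions cancel against the regulator $-\frac{\beta^2}{4N}(\sum_i z_i^2)^2$ and the residual overlap terms regroup as the quadratic $\meanv{(q_{12}-q_a)^2}_a$ under the states $\meanv{\cdot}_a$. Once that identity is granted, the present theorem amounts to integration, a sign observation, and the trivial factorization at $t=0$, so no further obstacle arises.
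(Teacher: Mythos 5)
Your proposal is correct and follows essentially the same route as the paper: the paper's proof likewise consists of integrating the identity of Theorem \ref{RSB3} via the sum rule (\ref{sumrul}) and observing that the error term is negative definite since $m_{a+1}-m_a\geq 0$ and $\meanv{(q_{12}-q_a)^2}_a\geq 0$. Your additional remarks on the $N$-independence of $\varphi_N(0;x)$ and the passage to the thermodynamic limit are consistent with what the paper establishes elsewhere (Theorem 2 and the factorization computation for $\varphi_N(0;x)$), so no new obstacle arises.
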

 \nin Clearly, Theorem \ref{RSB4} follows from Theorem \ref{RSB3} by taking into account $(\ref{sumrul})$ and noting that the error term, containing overlap's
 fluctuation around every $q_a$, is negative defined.
 
 \nin Now let us go to Theorem \ref{RSB3}. The proof is straightforward and we will indicate only the main points. We begin with
 \begin{lemma}\label{rsbl1}
 $$\frac{d}{dt}\varphi(t;x)=\frac{1}{N}\E(f_1\ldots f_KZ_K^{-1}\partial_tZ_K)$$
 where
 \bea
 Z_K^{-1}\partial_tZ_K&=&\tilde{Z}_N^{-1}\partial_t\tilde{Z}_N\nn\\
 &=&\frac{\beta}{2\sqrt{2tN}}\sum_{i,j=1}^NJ_{ij}\omega(z_iz_j)-\frac{\beta^2}{4N}\omega((\sum_{i=1}^Nz_i^2)^2)\nn\\
 &-&\frac{\beta}{2\sqrt{1-t}}\sum_{a=1}^K\sqrt{q_a-q_{a-1}}\sum_{i=1}^NJ^a_i \omega(z_i)-\frac{C}{2}\sum_{i=1}^N\omega(z_i^2)\nn
 \eea
 \end{lemma}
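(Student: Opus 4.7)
The plan is to differentiate $\varphi_N(t;x) = \frac{1}{N}\E_0\log Z_0$ and propagate the derivative through the recursive tower $Z_K, Z_{K-1}, \ldots, Z_0$ one layer at a time, then differentiate $\tilde Z_N$ explicitly at the top. This is essentially a clean application of the chain rule that is made slightly non-trivial by the fact that each layer $Z_{a-1}=(\E_a Z_a^{m_a})^{1/m_a}$ involves both an expectation and a power, and by the need to recognize the weights $f_a$ hidden inside the derivative.

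First I would note that since the outermost layer only involves $\E_0$ applied to $\log Z_0$, one immediately gets
\be
\frac{d}{dt}\varphi_N(t;x) = \frac{1}{N}\E_0\bigl(Z_0^{-1}\partial_t Z_0\bigr).
\ee
The main computational step is the one-layer identity: for each $a = 1,\ldots,K$,
\be
Z_{a-1}^{-1}\partial_t Z_{a-1} = \E_a\bigl(f_a\, Z_a^{-1}\partial_t Z_a\bigr).
\ee
This follows from $Z_{a-1} = (\E_a Z_a^{m_a})^{1/m_a}$ by differentiating under $\E_a$:
\be
\partial_t Z_{a-1} = \frac{1}{m_a}(\E_a Z_a^{m_a})^{\frac{1}{m_a}-1}\,\E_a(m_a Z_a^{m_a-1}\partial_t Z_a) = Z_{a-1}\,\E_a\!\left(\frac{Z_a^{m_a}}{\E_a Z_a^{m_a}}\, Z_a^{-1}\partial_t Z_a\right),
\ee
and the bracketed quantity is exactly $f_a Z_a^{-1}\partial_t Z_a$ by definition $(\ref{deff})$. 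Note that $f_1,\ldots,f_{a-1}$ do not depend on $J_i^a$, so they commute through $\E_a$ in the iterated average.

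Iterating this identity from $a=1$ up to $a=K$ and using that each $f_b$ for $b<a$ is measurable with respect to noises that $\E_a$ does not touch, I obtain
\be
\frac{d}{dt}\varphi_N(t;x) = \frac{1}{N}\E_0\E_1\cdots\E_K\bigl(f_1 f_2 \cdots f_K\, Z_K^{-1}\partial_t Z_K\bigr) = \frac{1}{N}\E\bigl(f_1\cdots f_K\, Z_K^{-1}\partial_t Z_K\bigr),
\ee
which is the first assertion, since $Z_K = \tilde Z_N$ by definition.

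Finally, the formula for $\tilde Z_N^{-1}\partial_t\tilde Z_N$ is obtained by direct term-by-term differentiation of the exponent in $(\ref{RSB1})$: the two-body term contributes the factor $\frac{\beta}{2\sqrt{2tN}}$, the regularization contributes $-\frac{\beta^2}{4N}$, each one-body Gaussian perturbation contributes $-\frac{\beta}{2\sqrt{1-t}}\sqrt{q_a-q_{a-1}}$, and the Lagrangian mass term contributes $-\frac{C}{2}$. Pulling the ratio $\tilde Z_N^{-1}(\cdots)\tilde Z_N$ inside as the Boltzmann state $\omega$ gives precisely the claimed expression. The only mild subtlety is that $\omega$ here refers to the $t$-dependent state of the interpolating system, not the original one, but this is already the convention established after $(\ref{deff})$.

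I do not expect a substantive obstacle: the entire argument is bookkeeping, and the only place one has to be careful is in verifying that the recursion genuinely produces the product $f_1\cdots f_K$ with the correct ordering and independence structure. Once that recursion is set up, the rest is immediate.
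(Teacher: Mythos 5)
Your proof is correct and follows exactly the route the paper takes: the paper's own proof consists of stating the one-layer recursion $Z_a^{-1}\partial_t Z_a=\E_{a+1}(f_{a+1}Z_{a+1}^{-1}\partial_t Z_{a+1})$ and iterating, which is precisely your argument, only you spell out the differentiation of $(\E_a Z_a^{m_a})^{1/m_a}$ and the explicit $t$-derivative of $\tilde Z_N$ that the paper leaves implicit.
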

 \begin{proof}
 From the definition $(\ref{deff})$ we get , for $a=0,1,\ldots K-1$
 $$
 Z_a^{-1}\partial_tZ_a=\E_{a+1}(f_{a+1}Z_{a+1}^{-1}\partial_tZ_{a+1})
 $$
 and the proof follows iterating this formula.
 \end{proof}
 \nin Now, using a standard integration by parts on the external noise, we get
 \bea
 \E(J_{ij}f_1\ldots f_K\omega(z_iz_j))&=&\sum_{a=1}^K\E(f_1\ldots \partial_{J_{ij}}f_a\ldots f_K\omega(z_iz_j))\nn\\
 &+&\E(f_1\ldots f_K\partial_{J_{ij}}\omega(z_iz_j))\nn
 \eea
 \bea
 \E(J_i^af_1\ldots f_K\omega(z_i))&=&\sum_{b=1}^K\E(f_1\ldots \partial_{J_i^a}f_b\ldots f_K\omega(z_i))\nn\\
 &+&\E(f_1\ldots f_K\partial_{J_i^a}\omega(z_i))
 \eea
 that can be completely evaluated using the following
 \begin{lemma}\label{rsbl2}
 For the $J$-derivative we have
 \bea
 \partial_{J_{ij}}\omega(z_iz_j)&=&\beta\sqrt{\frac{t}{2N}}(\omega(z_i^2z_j^2)-\omega^2(z_iz_j)),\label{l1}\\
 \partial_{J_i^a}\omega(z_i)&=&\beta\sqrt{1-t}\sqrt{q_a-q_{a-1}}(\omega(z_i^2)-\omega^2(z_i)),\label{l2}\\
 \partial_{J_{ij}}f_a&=&\beta\sqrt{\frac{t}{2N}}m_af_a(\tilde{\omega}_a(z_iz_j)-\tilde{\omega}_{a-1}(z_iz_j)),\label{l3}\\
 \partial_{J_i^a}f_b&=&0,\ \ \ \hbox{if}\  b<a,\label{l4}\\
 \partial_{J_i^a}f_b&=&\beta\sqrt{1-t}\sqrt{q_a-q_{a-1}} m_af_a\tilde{\omega}_a(z_i),\ \ \hbox{if}\  b=a,\label{l5}\\
 \partial_{J_i^a}f_b&=&\beta\sqrt{1-t}\sqrt{q_a-q_{a-1}} m_bf_b(\tilde{\omega}_b(z_i)-\tilde{\omega}_{b-1}(z_i)),\hbox{if}\  b>a.\label{l6}
 \eea
 \end{lemma}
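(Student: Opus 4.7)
The plan is to dispatch (\ref{l1})--(\ref{l2}) by the standard identity $\partial_J\omega(O)=[\partial_J\log Z]\,\omega(O)+Z^{-1}\mathbb{E}_z(\partial_J O)e^{-\beta H}-\omega(O)\partial_J\log Z$, applied to the two classes of couplings that appear linearly in the Hamiltonian defining (\ref{RSB1}). Concretely, $J_{ij}$ enters $\tilde Z_N$ only through $\beta\sqrt{t/(2N)}\,J_{ij}z_iz_j$, so differentiating gives $\partial_{J_{ij}}\omega(z_iz_j)=\beta\sqrt{t/(2N)}\,(\omega(z_i^2z_j^2)-\omega(z_iz_j)^2)$, which is (\ref{l1}); similarly $J^a_i$ enters only through $\beta\sqrt{(1-t)(q_a-q_{a-1})}\,J^a_iz_i$, producing (\ref{l2}). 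These are routine once the relevant coefficients are identified.

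The heart of the lemma is the chain of formulas (\ref{l3})--(\ref{l6}) for the derivatives of $f_a=Z_a^{m_a}/\mathbb{E}_aZ_a^{m_a}$. The first key step is an iterative identity for logarithmic derivatives of the recursively defined $Z_a$: from $Z_a=(\mathbb{E}_{a+1}Z_{a+1}^{m_{a+1}})^{1/m_{a+1}}$ one gets
\begin{equation}
Z_a^{-1}\partial_JZ_a=\mathbb{E}_{a+1}\bigl(f_{a+1}Z_{a+1}^{-1}\partial_JZ_{a+1}\bigr),
\nonumber\end{equation}
and iterating down to $K$ yields
\begin{equation}
Z_a^{-1}\partial_JZ_a=\mathbb{E}_{a+1}\cdots\mathbb{E}_K\bigl(f_{a+1}\cdots f_K\,Z_K^{-1}\partial_JZ_K\bigr)=\tilde\omega_a\bigl(Z_K^{-1}\partial_JZ_K\cdot(\text{constant pulled out})\bigr).
\nonumber\end{equation}
Applied with $J=J_{ij}$ this gives $Z_a^{-1}\partial_{J_{ij}}Z_a=\beta\sqrt{t/(2N)}\,\tilde\omega_a(z_iz_j)$, and with $J=J^c_i$ (for $c\le a$, otherwise the derivative is zero since $Z_a$ does not depend on $J^c$) it gives $Z_a^{-1}\partial_{J^c_i}Z_a=\beta\sqrt{(1-t)(q_c-q_{c-1})}\,\tilde\omega_a(z_i)$.

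The second key step is the telescoping identity
\begin{equation}
\mathbb{E}_a\bigl(f_a\tilde\omega_a(\cdot)\bigr)=\tilde\omega_{a-1}(\cdot),
\nonumber\end{equation}
which is immediate from the definition $\tilde\omega_a(\cdot)=\mathbb{E}_{a+1}\cdots\mathbb{E}_K(f_{a+1}\cdots f_K\,\omega(\cdot))$. Combining the two, write $\partial_Jf_b=f_b\,\partial_J\log f_b=f_b\bigl(m_b\,Z_b^{-1}\partial_JZ_b-m_b\,\mathbb{E}_b(f_bZ_b^{-1}\partial_JZ_b)\bigr)$ and substitute. For $\partial_{J_{ij}}f_a$ we get $\beta\sqrt{t/(2N)}\,m_af_a(\tilde\omega_a(z_iz_j)-\tilde\omega_{a-1}(z_iz_j))$, which is (\ref{l3}). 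For $\partial_{J^a_i}f_b$ we use the support constraint $c\le b$: if $b<a$ then $Z_b$ does not depend on $J^a_i$, proving (\ref{l4}); if $b=a$, the $\mathbb{E}_a$-counterterm vanishes because $\mathbb{E}_aZ_a^{m_a}$ has already been integrated over $J^a_i$, giving (\ref{l5}); if $b>a$, the full two-term structure survives and yields (\ref{l6}).

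The only real obstacle is bookkeeping, namely keeping straight which $Z_b$ and $f_b$ depend on which $J^c_i$, and recognising that the $\mathbb{E}_b$ in the denominator of $f_b$ kills the subtracted term exactly when $b=a$. No estimate is needed — everything reduces to differentiating a recursively defined product and applying the telescoping identity for $\tilde\omega_a$.
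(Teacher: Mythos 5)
Your argument is correct and follows essentially the same route as the paper's (very terse) proof: the recursion $Z_a^{-1}\partial_JZ_a=\E_{a+1}(f_{a+1}Z_{a+1}^{-1}\partial_JZ_{a+1})$ iterated down to $Z_K$, the quotient rule for $f_b=Z_b^{m_b}/\E_bZ_b^{m_b}$, and the bookkeeping of which $Z_b$ depends on which $J^a_i$ (including the observation that the $\E_a$-counterterm disappears exactly when $b=a$). One small transcription slip: the general identity you display for $\partial_J\omega(O)$ should have $\omega\bigl(O\,\partial_J(-\beta H)\bigr)$ as its first term rather than $[\partial_J\log Z]\,\omega(O)$ (as written, your first and third terms cancel), but the concrete formulas (\ref{l1})--(\ref{l2}) you then state are the correct ones.
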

 \begin{proof}
 Eq. $(\ref{l1})$, $(\ref{l2})$ follow from standard calculations, while Eq. $(\ref{l3})$ comes from the definition $(\ref{deff})$ and the easily established
 \bea
 \partial_{J_{ij}}Z_a^{m_a}&=&m_aZ_a^{m_a}Z_a^{-1}\partial_{J_{ij}}Z_a\nn\\
 Z_a^{-1}\partial_{J_{ij}}Z_a&=&\E_{a+1}(f_{a+1}Z_{a+1}^{-1}\partial_{J_{ij}}Z_{a+1})\ \ \hbox{for}\ a=0,\ldots,K-1\nn\\
 Z_K^{-1}\partial_{J_{ij}}Z_K&=&\beta\sqrt{\frac{t}{2N}}\omega(z_iz_j)\nn\\
 Z_a^{-1}\partial_{J_{ij}}Z_a&=&\beta\sqrt{\frac{t}{2N}}\E_{a+1}(f_{a+1}\ldots f_K\omega(z_iz_j))=\beta\sqrt{\frac{t}{2N}}\tilde{\omega}_a(z_iz_j)\nn
 \eea
 \nin In the same way we get Eq. $(\ref{l5})$, $(\ref{l6})$, where we have to remember that $f_b$ does not depend on $J^a_i$ if $b<a$.
 \end{proof}
 \nin If we use Lemma \ref{rsbl1} and Lemma \ref{rsbl2}, after some straightforward calculations, we obtain
 \bea
 \frac{d}{dt}\varphi_N(t)&=&\frac{\beta^2}{4}\sum_{a=1}^K(m_{a+1}-m_a)q_a^2\nn\\
 &-&\frac{\beta^2}{4}\sum_{a=1}^K(m_{a+1}-m_a)\meanv{(q_{12}-q_a)^2}_a\nn\\
 &+&\frac{1}{2N}(-\beta^2Q-C)\sum_{i=1}^N\meanv{z_i^2}_K
 \eea
 and we complete the proof of Theorem \ref{RSB3} setting $C=-\beta^2Q$.

\nin Now we should find a general expression for $\varphi_N(0;x)$, as in the following
\begin{teorema}
For any choice of the piecewise functional order parameter $x$, the initial condition $\varphi_N(0;x)$ is given by
\be
\varphi_N(0;x)=\log \s(Q)+f(0,0;x),
\ee
where $f(q,y;x)$ is the solution of the Parisi equation, i.e. the nonlinear antiparabolic partial differential equation
\be\label{eqp}
\partial_q f(q,y)+ {1\over2}\bigl(f^{\prime\prime}(q,y)+x(q){f^\prime}^2(q,y)\bigr)=0,
\ee
with final condition at $q=Q$
\be\label{eqpfc}
f(Q,y)=\frac{\b^2}{2}\s^2(Q)y^2.
\ee
\end{teorema}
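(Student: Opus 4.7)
The plan is to exploit the fact that at $t=0$ the interpolating Hamiltonian in (\ref{RSB1}) contains no two-body interaction, so $\tilde Z_N(0;x)$ factorizes completely over sites. After performing the Gaussian integration over each $z_i$ and pulling out the prefactor $\sigma(Q)$ that this produces, what remains is a product of per-site exponentials of squared linear combinations of the independent noises $J^a_i$; the nested averages $Z_{a-1} = (\mathbb{E}_a Z_a^{m_a})^{1/m_a}$ then collapse, thanks to independence across $i$, to an iterated one-site recursion that is precisely the Parisi recursion whose continuous form is the antiparabolic PDE (\ref{eqp}).

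Concretely, first I would substitute $C=-\beta^2 Q$ into (\ref{RSB1}) at $t=0$, so that the quadratic coefficient of each $z_i^2$ becomes $-\tfrac{1}{2}(1-\lambda+\beta^2 Q) = -\tfrac{1}{2}\sigma^{-2}(Q)$. A direct Gaussian computation gives
\begin{equation*}
\mathbb{E}_{z_i}\exp\!\left(Y_i z_i - \frac{z_i^2}{2\sigma^2(Q)}\right) = \sigma(Q)\,\exp\!\left(\tfrac{1}{2}\sigma^2(Q)\,Y_i^2\right),
\end{equation*}
with $Y_i = \beta\sum_{a=1}^K \sqrt{q_a-q_{a-1}}\,J^a_i$, whence $Z_K(0;x) = \sigma(Q)^N \prod_i \exp(\tfrac{1}{2}\sigma^2(Q) Y_i^2)$. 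Since the $J^a_i$ are independent across both $a$ and $i$, and the operator $(\mathbb{E}_a(\,\cdot\,)^{m_a})^{1/m_a}$ preserves products of independent quantities, this product structure is propagated unchanged by the entire recursion; hence $Z_0 = (Z_0^{(1)})^N$ and
\begin{equation*}
\varphi_N(0;x) \;=\; \log Z_0^{(1)} \;=\; \log \sigma(Q) + g_0(0),
\end{equation*}
where $g_a$ is obtained, after rescaling $y = Y/\beta$ so that the running variance becomes $q_a$ instead of $\beta^2 q_a$, from
\begin{equation*}
g_K(y) = \tfrac{\beta^2}{2}\sigma^2(Q)\,y^2,\qquad g_{a-1}(y) = \tfrac{1}{m_a}\log\mathbb{E}_\xi\exp\!\left(m_a\, g_a\!\bigl(y + \sqrt{q_a-q_{a-1}}\,\xi\bigr)\right),
\end{equation*}
for $\xi\sim\mathcal{N}(0,1)$ and $a=K,\ldots,1$.

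Finally I would identify this recursion with the solution of the Parisi PDE (\ref{eqp}) with final datum (\ref{eqpfc}). On any subinterval $q\in[q_{a-1},q_a]$ on which $x(q)\equiv m_a$ is constant, the Cole--Hopf substitution $u = \exp(m_a f)$ turns (\ref{eqp}) into the backward heat equation $\partial_q u + \tfrac{1}{2}\partial_y^2 u = 0$, whose solution propagates by Gaussian convolution, $u(q_{a-1},y) = \mathbb{E}_\xi\, u(q_a,\,y + \sqrt{q_a-q_{a-1}}\,\xi)$. Applying $\tfrac{1}{m_a}\log$ reproduces exactly the recursion above, while continuity of $f(q,y)$ at the break-points $q=q_a$ (built into the definition of the piecewise solution) matches the two sides across adjacent intervals. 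Hence $g_a(y) = f(q_a,y;x)$, and in particular $g_0(0) = f(0,0;x)$, giving the claimed identity.

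The only genuine obstacle is bookkeeping: one has to track carefully the rescaling $y\mapsto\beta y$, which moves the factor $\beta^2$ from inside the recursion (where it accompanies $\sqrt{q_a-q_{a-1}}$ in the shift of $J^a_i$) into the final condition (\ref{eqpfc}). Once this is in place, the argument is a routine adaptation of the Parisi--Guerra scheme for the SK model; the only novelty produced by the soft unbounded spin variables is the additive prefactor $\log\sigma(Q)$ produced by the Gaussian $z$-integration, which accounts for the regularization built into $Z_N(\beta,\lambda,J)$.
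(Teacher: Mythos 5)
Your proposal is correct and follows essentially the same route as the paper: factorization of the Boltzmannfaktor at $t=0$, Gaussian integration over each $z_i$ producing the $\log\sigma(Q)$ prefactor and the final datum $f(Q,y)=\frac{\beta^2}{2}\sigma^2(Q)y^2$, reduction to a single site, and backward propagation of the nested averages $Z_{a-1}=(\E_a Z_a^{m_a})^{1/m_a}$ as the Parisi recursion. The only difference is that you spell out the one-site recursion and its Cole--Hopf identification with the antiparabolic PDE explicitly, whereas the paper delegates this step to the reference \cite{RSB}.
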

\begin{proof}
Since the \textit{Boltzmannfaktor} factorizes at $t=0$, we have that
\bea
\tilde Z_N(0;x)&=&
\E_z \exp\left(\frac{(\l-\b^2Q)}{2}\sum_{i=1}^Nz_i^2\right)\exp\left(\b\sum_{a=1}^K\sqrt{q_a-q_{a-1}}\sum_{i=1}^NJ^a_iz_i\right) \nn\\
&=&\prod_{i=1}^N \s(Q)\exp\left(\frac{\b^2}{2}\s(Q)^2(\sum_{a=1}^K\sqrt{q_a-q_{a-1}}J^a_i)^2\right)\nn\\
&\equiv&\prod_{i=1}^N \s(Q)\exp\left(f(Q,\sum_{a=1}^K\sqrt{q_a-q_{a-1}}J^a_i)\right).\label{ersb1}
\eea
From the definition (\ref{RSB2}) of the interpolating function $\varphi_N(t;x)$, we note that, due to the $1/N$ factor, we can evaluate the (\ref{ersb1}) on a single site only. The $\s(Q)$ goes to form the $\log \s(Q)$ term and what remains is just the solution of the Parisi equation, evaluated at $y=0$, and propagated from $q=Q$ to $q=0$ through a series of gaussian integration as in \cite{RSB}.
\end{proof}

\nin Now we can exactly solve equation (\ref{eqp}) with final condition (\ref{eqpfc}) to find $f(0,0;x)$ and so $\varphi_N(0;x)$. Infact we give the following
\begin{lemma}
For any functional order parameter $x\in\mathcal{X}$, the solution of equation (\ref{eqp}) with final condition (\ref{eqpfc}), evaluated at $y=0$ and $q=0$ is given by
\be
f(0,0;x)=\frac 1 2 \b^2 \s^2(Q)\int_0^Q\frac{dq}{1-\b^2\s^2(Q)\int_q^Q x(q')dq'}
\ee
\end{lemma}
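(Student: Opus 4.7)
The plan is to reduce the Parisi PDE to a system of ordinary differential equations by a quadratic Gaussian Ansatz, solve those ODEs explicitly, and then evaluate at $y=0$, $q=0$.

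First I would observe that the final condition $f(Q,y)=\frac{\beta^2}{2}\sigma^2(Q)y^2$ is quadratic in $y$ and contains no linear or constant term beyond a pure $y^2$ part. Since the Parisi equation $\partial_q f+\frac12 f''+\frac{x(q)}{2}(f')^2=0$ preserves the class of quadratic functions in $y$ (the nonlinear term $(f')^2$ of a quadratic is again quadratic, plus the diffusion adds a $y$-independent piece), I look for a solution of the form
\be
f(q,y)=\tfrac12 A(q)\,y^2+B(q).
\ee
Plugging in and matching coefficients of $y^2$ and of $y^0$ separately gives the two ODEs
\be
A'(q)=-x(q)A(q)^2,\qquad B'(q)=-\tfrac12 A(q),
\ee
with the final data $A(Q)=\beta^2\sigma^2(Q)$ and $B(Q)=0$ read off from (\ref{eqpfc}). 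Because $x$ is piecewise constant, $A$ is well defined across the breakpoints $q_a$ (the ODE just changes slope); continuity of $A$ and $B$ at each $q_a$ guarantees that the Ansatz yields a genuine weak solution of the antiparabolic Parisi equation on $[0,Q]$.

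Next I would solve the ODE for $A$. Writing it as $\frac{d}{dq}\bigl(1/A(q)\bigr)=x(q)$ and integrating from $q$ up to $Q$ gives
\be
\frac{1}{A(q)}=\frac{1}{\beta^2\sigma^2(Q)}-\int_q^Q x(q')\,dq',
\ee
equivalently
\be
A(q)=\frac{\beta^2\sigma^2(Q)}{1-\beta^2\sigma^2(Q)\int_q^Q x(q')\,dq'}.
\ee
Note that $A(q)$ stays positive under the standard admissibility assumption on $x$ (which is the same condition ensuring that the RS regularization is meaningful), so no blow-up occurs on $[0,Q]$.

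Finally I integrate $B'=-A/2$ from $0$ to $Q$ and use $B(Q)=0$ to obtain $B(0)=\tfrac12\int_0^Q A(q)\,dq$. Since $f(0,0;x)=\tfrac12 A(0)\cdot 0+B(0)=B(0)$, inserting the expression for $A(q)$ gives exactly
\be
f(0,0;x)=\tfrac12\beta^2\sigma^2(Q)\int_0^Q\frac{dq}{1-\beta^2\sigma^2(Q)\int_q^Q x(q')\,dq'},
\ee
which is the claimed formula. The only genuinely delicate point is checking that the quadratic Ansatz is consistent with the iterated-Gaussian-integration construction of the Parisi solution across each jump of $x$; this reduces to noting that a Gaussian convolution of a Gaussian in $y$ is again Gaussian in $y$, so both the Cole--Hopf-type integration prescribed on each interval $[q_{a-1},q_a]$ and the raising to the power $m_a/m_{a-1}$ at the endpoint preserve the quadratic form, matching the $A,B$ values that the ODEs propagate.
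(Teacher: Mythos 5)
Your proposal is correct and follows essentially the same route as the paper: the quadratic Ansatz $f(q,y)=\tfrac12 A(q)y^2+B(q)$ is exactly the paper's $f=a(q)+\tfrac12 b(q)y^2$, the resulting ODEs $A'=-xA^2$, $B'=-\tfrac12 A$ with the same final data are the paper's system, and the integration yielding $f(0,0;x)=B(0)=\tfrac12\int_0^Q A(q)\,dq$ reproduces the claimed formula. Your additional remarks on positivity of $A$ and on consistency of the Ansatz across the jumps of $x$ are sound refinements that the paper leaves implicit.
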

\begin{proof}
We look for a solution of (\ref{eqp}) of the form $f(q,y)=a(q)+\frac 1 2 b(q)y^2$. Since $f$ must fulfill final condition (\ref{eqpfc}), it has to be $a(Q)=0$ and $b(Q)=\b^2\s^2(Q)$. If we want $f(q,y)$ to be a solution of  (\ref{eqp})
\bea
&&\partial_q f(q,y)+ {1\over2}\bigl(f^{\prime\prime}(q,y)+x(q){f^\prime}^2(q,y)\bigr)\nn\\
&=&a'(q)+\frac 1 2 b(q) +\frac 1 2 y^2\left( b'(q)+x(q)b^2(q)\right)=0,
\eea
i.e. $f(q,y)$ is a solution of (\ref{eqp}) if $a(q)$ and $b(q)$ are solutions of the  ordinary differential equation's system
\bea
a'(q)+\frac 1 2 b(q)&=&0 \label{eqs1}\\
b'(q)+x(q)b^2(q)&=&0\label{eqs2},
\eea
with final conditions $a(Q)=0$ and $b(Q)=\b^2\s^2(Q)$. Integrating equation (\ref{eqs2}) we obtain
\be\label{eqs3}
\frac{1}{b(q)}=\frac{1}{\b^2\s^2(Q)}-\int_q^Q x(q')dq'.
\ee
Putting (\ref{eqs3}) into equation \ref{eqs1} and integrating, we have the proof.
\end{proof}
\nin Finally, from the continuity of $f(q,y;x)$ respect to the choice of the functional order parameter $x$ (see \cite{RSB}, \cite{leshouches}) and noticing that
\be
\frac{\beta^2}{4}\sum_{a=1}^K(m_{a+1}-m_a)q_a^2=\frac{\b^2}{4}Q^2-\frac{\beta^2}{2}\int_0^Q q x(q) dq,
\ee
and we can use \ref{RSB4} for stating our main result

\begin{teorema}
The pressure of the model is defined by the following variational principle:
 \be
 A(\beta,\lambda)=\inf_{x\in\mathcal{X}}\hat{A}(\beta,\lambda;x),
 \ee
 with
 \bea\label{defarsb}
 \hat{A}(\beta,\lambda;x)&=&
 \log \s(Q)+\frac 1 2 \b^2 \s^2(Q)\int_0^Q\frac{dq}{1-\b^2\s^2(Q)\int_q^Q x(q')dq'}\nn\\
 &+&\frac{\b^2}{4}Q^2-\frac{\beta^2}{2}\int_0^Q q x(q) dq,
 \eea
 Moreover the infimum is attaiened at the RS functional order parameter $x=0$, $0\leq q<q^{RS}$, $x=1$ elsewhere.
 
\end{teorema}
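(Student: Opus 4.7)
My plan is to sandwich $A(\b,\l)$ between the upper bound supplied by Theorem \ref{RSB4} and the matching lower bound proved in Section 7, and then to identify the minimizer by evaluating $\hat A$ directly at the RS functional order parameter. Concretely, three ingredients are needed: (i) the variational upper bound $A\le\inf_x\hat A(\b,\l;x)$, (ii) the identity $\hat A(\b,\l;x^{RS}) = A^{RS}(\b,\l)$, and (iii) the lower bound $A(\b,\l)\ge A^{RS}(\b,\l)$.

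\textbf{Upper bound step.} For any piecewise constant $x\in\mathcal{X}$, Theorem \ref{RSB4} gives $A \le \varphi(0;x) + \frac{\b^2}{4}\sum_{a=1}^{K}(m_{a+1}-m_a)q_a^2$. I would substitute the representation $\varphi_N(0;x) = \log\s(Q) + f(0,0;x)$ from the previous theorem together with the closed form of $f(0,0;x)$ supplied by the Parisi-equation lemma, and use the summation-by-parts identity $\sum_a(m_{a+1}-m_a)q_a^2 = Q^2 - 2\int_0^Q q\,x(q)\,dq$ already noted in the text. The right-hand side is then exactly $\hat A(\b,\l;x)$, and a standard density/continuity argument extends the bound from piecewise constant $x$ to all of $\mathcal{X}$, yielding $A\le\inf_{x\in\mathcal{X}}\hat A(\b,\l;x)$.

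\textbf{RS evaluation.} For $x^{RS}(q)=0$ on $[0,q^{RS})$ and $x^{RS}(q)=1$ on $[q^{RS},Q]$, choosing $Q=q^{RS}$ renders the second interval degenerate. Then $\int_q^Q x^{RS}(q')\,dq' = 0$ throughout the integration range, so the $q$-integral in $\hat A$ collapses to $\int_0^{q^{RS}} dq = q^{RS}$, while $\int_0^Q q\, x^{RS}(q)\,dq = 0$. A direct comparison with (\ref{rs5}) and Proposition \ref{rsprop} yields $\hat A(\b,\l;x^{RS}) = \log\s(q^{RS}) + \tfrac{1}{2}\b^2\s^2(q^{RS})\,q^{RS} + \tfrac{\b^2}{4}(q^{RS})^2 = A^{RS}(\b,\l)$.

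\textbf{Closing the sandwich.} Invoking the lower bound $A(\b,\l)\ge A^{RS}(\b,\l)$ proved in Section 7, I combine the three ingredients into the chain
\begin{equation*}
A^{RS} \;\le\; A \;\le\; \inf_{x\in\mathcal{X}} \hat A(\b,\l;x) \;\le\; \hat A(\b,\l;x^{RS}) \;=\; A^{RS},
\end{equation*}
which forces equality throughout. This simultaneously establishes the variational formula and identifies $x^{RS}$ as the order parameter attaining the infimum. The hard part is clearly ingredient (iii): without the Section 7 lower bound one can only conclude $A\le A^{RS}$, and there would be no way to rule out some broken-symmetry $x$ producing a smaller value of $\hat A$ than $x^{RS}$ does. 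Every other step here is either a direct application of Theorem \ref{RSB4} or a routine algebraic collapse exploiting the vanishing of $x^{RS}$ on $[0,q^{RS})$.
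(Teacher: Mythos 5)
Your proof is correct, but it identifies the minimizer by a genuinely different route than the paper. For the first claim ($A\leq\inf_{x}\hat A$ and the explicit form of $\hat A$) you and the paper do the same thing: combine Theorem \ref{RSB4} with the Parisi-equation lemma and the summation-by-parts identity, plus continuity in $x$. The divergence is in showing that the infimum is attained at the RS order parameter. The paper stays entirely inside Section 6 and argues variationally: it perturbs $x$ along a family $x_{\epsi}$, computes $\frac{d}{d\epsi}\hat A(\b,\l;x_{\epsi})|_{\epsi=0}$, and reads off from the arbitrariness of $\e(q)=\frac{d}{d\epsi}x_\epsi(q)|_{\epsi=0}$ the stationarity condition $\b\s^2(q^{RS})=1$ with $x=0$, i.e.\ an Euler--Lagrange identification of the critical point of the functional $\hat A$ itself, with no reference to the true pressure. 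You instead import the Section 7 reverse bound $A\geq A^{RS}$ and close the sandwich $A^{RS}\leq A\leq\inf_x\hat A\leq\hat A(x^{RS})=A^{RS}$. Each approach buys something: the paper's computation is self-contained in Section 6 and shows intrinsically that the RSB functional cannot improve on the RS bound (though, strictly, it treats the variation $\e$ as arbitrary even though admissible $x_\epsi$ must stay nondecreasing and $[0,1]$-valued, so the first-order condition should really be a one-sided inequality at the boundary of the constraint set); your sandwich sidesteps that delicacy entirely and is the only way to actually justify the \emph{equality} $A=\inf_x\hat A$ asserted in the statement, which the paper itself concedes at the end of Section 6 is not yet proven there and is completed only by the reverse bound. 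Your RS evaluation step (collapsing the $q$-integral to $q^{RS}$ when $Q=q^{RS}$ and $x\equiv 0$ below it, and matching against $\tilde A(\b,\l,\bar q)$ of Theorem \ref{RSteor}) is the same computation the paper performs implicitly and is sound.
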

\begin{proof}
\nin The first part of the theorem is a direct conseguence of all the results in this section; so we will focus our attention only on its last part, that is
$$
A^{RS}(\beta,\lambda)\leq\inf_{x\in\mathcal{X}}\hat{A}(\beta,\lambda;x).
$$
\nin Let $x_{\epsi}$ a family of functional order parameter parametrized by $\epsi$ and consider
$\hat A(\b,\l;x_{\epsi})$. We will find the infimum of $\hat A(\b,\l;x)$ imposing that $\frac{d}{d\epsi}\hat A(\b,\l;x_{\epsi})|_{\epsi=0}=0$ for any family $x_{\epsi}$ passing through $x_0=x$ when $\epsi=0$. Using (\ref{defarsb}) and defining $\eta(q)=\frac{d}{d\epsi}x_{\epsi}(q)|_{\epsi=0}$, the infimum is achieved in $x$ satisfying
\bea
\frac{d}{d\epsi}\hat A(\b,\l;x_{\epsi})|_{\epsi=0}&=&-\frac{\b^2}{2}\int_0^Q q \eta(q) dq
+\frac 1 2 \b^4\s^4(Q)\int_0^Q dq\int_q^Q dq' \frac{\eta(q')}{\bar b^2 (q)}\nn\\
&=&-\frac{\b^2}{2}\int_0^Q q\  \eta(q) dq
+\frac 1 2 \b^4\s^4(Q)\int_0^Q dq\  \eta(q)\int_0^q  \frac{dq'}{\bar b^2 (q')}\nn\\
&=&-\frac{\b^2}{2}\int_0^Qdq\  \eta(q)\left(q-\b^2\s^4(Q)\int_0^q\frac{dq'}{\bar b^2 (q')}\right)=0\nn,
\eea
where we have defined $\bar b(q)=1-\b^2\s^2\int_q^Q x(q')dq'$. From the arbitrariness of $\eta(q)$, it has to be
\be
q-\b^2\s^4(Q)\int_0^q\frac{dq'}{\bar b^2 (q')}=\int_0^q dq'\left(1-\frac{\b^2\s^4(Q)}{\bar b^2 (q')}\right)=0,
\ee
for every $q$ and consequently $1-\frac{\b^2\s^4(Q)}{\bar b^2 (q')}$=0, i.e., recalling the definition of $\bar b(q)$,
\be
1-\b^2\s^2(Q)\int_q^Q x(q')dq'=\b\s^2(Q),
\ee
that has the solution $x=0$ and $Q=q^{RS}$, such that $\b\s^2(q^{RS})=1$.
\end{proof}

\nin So we have completed the RSB scheme, showing that the RSB bound for the free energy gives the same result of the RS approssimation. Anyway we have not yet proven that the replica symmetric solution is the exact infinite volume free energy of the model. That will be the subject of our last section.

\section{The Reverse Bound}

In this section we will show a lower bound for the gaussian pressure by using a well known method in statistical mechanics. The idea, coming from the proof of the equivalence between microcanonical and canonical ensemble \cite{Ru69}, is to cut the space into spherical shells of thickness $\e$ in such a way the integral over the whole space in the partition function is just the sum over all the shell; taking just one single shell then, we obtain a lower bound. 

\nin Namely, we fix $R\in[0,\infty)$ and define
\be 
S^{\e}_{R_N}=\{z\in\R^N: R\sqrt{N}-\e \leq \| z\|\leq R\sqrt{N} \},
\ee  
such that,
\bea 
Z^g_N(\b,\l,J)&=&\int_{\R^N}dz\frac{e^{-\|z\|^2/2}}{(2\pi)^{N/2}}e^{\left(-\beta H_N(z,J)-\frac{\beta^2}{4N}\|z\|^4+\frac{\lambda}{2}\|z\|^2\right)}\nn\\
&\geq &  \int_{S^{\e}_{R_N}}dz\frac{e^{-\|z\|^2/2}}{(2\pi)^{N/2}}e^{\left(-\beta H_N(z,J)-\frac{\beta^2}{4N}\|z\|^4+\frac{\lambda}{2}\|z\|^2\right)}\nn\\
&=& \frac{\e S_{R_N}}{(2\pi)^{\frac N 2}}\ \ \frac{1}{\e}\int_{S^{\e}_{R_N}}\frac{dz}{S_{R_N}}e^{\left(-\beta H_N(z,J)-\frac{\beta^2}{4N}\|z\|^4+\frac{(\lambda-1)}{2}\|z\|^2\right)}\nn\\
&\geq & \frac{\e S_{R_N}}{(2\pi)^{\frac N 2}}\ \ e^{-\frac{\b^2 R^4 N}{4}+\frac{(\l -1)R^2 N}{2}+ \Or{\e\sqrt{N}}}\ \  \frac{1}{\e}\int_{S^{\e}_{R_N}}\frac{dz}{S_{R_N}}e^{-\beta H_N(z,J)}\nn
\eea
Taking $\frac 1 N \E \log$ we get
\be 
A^g_N(\b,\l)\geq \frac 1 N\log(\frac{S_{R_N}}{(2\pi)^{\frac N 2}})-\frac{\b^2 R^4}{4}+\frac{(\l -1)R^2}{2} +A_{N,\e}^{sh}(\b,R_N)+ \Or{\frac 1{\sqrt{N}}}.
\ee
Since we can exchange the limits $\e\to0$ and $N\to\infty$ (as it is easy to prove), taking the infinite volume limit of both sides we obtain
\be 
A^g(\b,\l)\geq A^{sf}(\b,R)-\frac{\b^2 R^4}{4}+\frac{(\l -1)R^2}{2}+ \log(R)+\frac 1 2,
\ee
where an easy computation show that $\frac 1 N\log(\frac{S_{R_N}}{(2\pi)^{\frac N 2}})$ tends to $\log(R)+1/2$. Now we can take the supremum over $R\in[0,\infty]$ obtaining
\be\label{eq:main-formula} 
A^g(\b,\l)\geq\sup_{R\in(0,\infty)}\left[A^{sf}(\b,R)-\frac{\b^2R^4}{4}+\frac{(\l-1)R^2}{2}+\log R+\frac{1}{2} \right].
\ee

\nin In what follows we will see that the right side of ($\ref{eq:main-formula}$) is just the replica symmetric approximation $A^{g}_{RS}$ and thus we have just the lower bound we need.

\nin As it is well known \cite{KT}\cite{CS}\cite{PT}, the free energy of the spherical model can be expressed as the variational principle
$$
A^{sf}(\b,R=1)=\min_{q\in[0,1]}\frac{1}{2}\left( \frac{q}{1-q}+\log(1-q)+\frac{\b^2}{2}(1-q^2)\right),
$$
where the minimum is achieved at $q=0$, for $\b<1$, and at $q=1-\frac{1}{\b}$ otherwise. Furthermore, it's easy to check (through a change of variable) that, in the spherical model, a shift on the radius is equivalent to a temperature rescaling, \textit{i.e.}
\be 
A^{sf}(\b,R)=A^{sf}(\b R^2,1)=
\begin{cases}
\frac{\b^2 R^4}{4},  & \text{$\b R^2<1$}\\
\b R^2 -\log R-\frac 1 2 \log \b -\frac 3 4. & \text{$\b R^2>1$}.
\end{cases}
\ee

\nin First, we note that the function we want to to maximize
\be 
f(\b,R)=\begin{cases}
\frac{(\l-1)R^2}{2}+\frac 1 2 \log(R^2)+\frac 1 2,  & \text{$\b R^2<1$}\\
\b R^2 -\frac{\b^2R^4}{4}+\frac{(\l-1)R^2}{2}-\frac 1 2 \log \b -\frac 1 4, & \text{$\b R^2>1$}
\end{cases}
\ee
\nin is continuous in $R\in(0,\infty)$, goes to $- \infty$ at the interval's extremes and is concave in $R^2$ such that it must have a finite unique maximum. It is more useful to extremalize in $R^2$, such that
\be 
\frac{\partial}{\partial R^2}f(\b,R)=\begin{cases}
(\l-1)+\frac 1 {2R^2},  & \text{$\b R^2<1$}\\
\b -\frac{\b^2R^2}{2}+(\l-1), & \text{$\b R^2>1$}
\end{cases}
\ee
and
\be 
\frac{\partial}{\partial R^2}f(\b,R)=0 \Leftrightarrow \begin{cases}
R^2=\frac{1}{1-\l},  & \text{$\b R^2<1$}\\
R^2=\frac{2\b+(\l-1)}{\b^2}, & \text{$\b R^2>1$}
\end{cases}
\ee
Thanks to the concavity of $f$, for each value of $(\b,\l)$, only one critical point $\bar{R}^2$ can exists: if $\b<(1-\l)$, $\bar{R}^2=1/(1-\l)$, otherwise $\bar{R}^2=2\b+(\l-1)/\b^2$ and we obtain
\be \label{gaussian free energy}
A^g(\b,\l)\geq\begin{cases}
 -\frac 1 2 \log(1-\l)   & \text{$\b<1-\l$}\\
 -\frac 1 2 \log(\b)+\frac{\b \bar{q}}{2}+\frac{\b^2\bar{q}^2}{4},   & \text{$\b>1-\l$}
\end{cases}
\ee 
with $\bar{q}(\b,\l)=\frac{\b-(1-\l)}{\b^2}$. Right side of equation ($\ref{gaussian free energy}$) is exactly the replica symmetric approximation of the model. We can put all information we have found in the following final 

\begin{teorema}\label{MAIN} In the thermodynamic limit, the pressure of the gaussian spin glass model satysfies the following inequality
\begin{itemize}
\item $A^g(\b,\l)\leq A^g_{RS}(\b,\l)$ \nn\\
\item $A^g(\b,\l)\geq \sup_{R\in(0,\infty)}\left[A^{sf}(\b,R)-\frac{\b^2R^4}{4}+\frac{(\l-1)R^2}{2}+\log R+\frac{1}{2} \right]$. \nn
\end{itemize}
Moreover,
$$
\sup_{R\in(0,\infty)}\left[A^{sf}(\b,R)-\frac{\b^2R^4}{4}+\frac{(\l-1)R^2}{2}+\log R+\frac{1}{2} \right]=A^g_{RS}(\b,\l)
$$
thus
$$
A^g(\b,\l)=A^g_{RS}(\b,\l)=\sup_{R\in(0,\infty)}\left[A^{sf}(\b,R)-\frac{\b^2R^4}{4}+\frac{(\l-1)R^2}{2}+\log R+\frac{1}{2} \right].
$$
\end{teorema}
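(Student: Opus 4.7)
The plan is to assemble Theorem \ref{MAIN} by combining three essentially independent ingredients, two of which are already in hand.

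First, the upper bound $A^g(\b,\l)\leq A^g_{RS}(\b,\l)$ is immediate from Theorem \ref{RSteor}: the quenched pressure is dominated by $\tilde A(\b,\l,\bar q)$ for every admissible $\bar q$, and by taking the infimum over $\bar q\in\mathcal{D}_{\b,\l}(\bar q)$ we obtain $A^g_{RS}$. I would record this in one line and move on.

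Second, for the lower bound I would carry out the spherical shell comparison exactly as laid out just above equation (\ref{eq:main-formula}): partition $\R^N$ into shells $S^{\e}_{R_N}$ of radii in $[R\sqrt N-\e,R\sqrt N]$, bound $Z^g_N$ below by the integral over a single shell, and note that on this shell the regularizing quartic and quadratic terms are constants up to $O(\e\sqrt N)$ corrections. This lets me factor them out and recognize the remaining integral as the (shell-truncated) spherical model partition function at radius $R_N$. Taking $\tfrac1N\E\log$, then $\e\to 0$ and $N\to\infty$ (the justification of this exchange, together with the fact that shell-truncated and full-sphere spherical pressures have the same thermodynamic limit, is the one genuinely delicate point), and finally taking the supremum in $R$, yields (\ref{eq:main-formula}).

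Third, I must verify that this supremum equals $A^g_{RS}(\b,\l)$. I would use the variational/explicit form of $A^{sf}(\b,R)$ recalled after (\ref{eq:main-formula}), which by the rescaling $A^{sf}(\b,R)=A^{sf}(\b R^2,1)$ splits into the two cases $\b R^2<1$ and $\b R^2>1$. Substituting into
\[
f(\b,R)=A^{sf}(\b,R)-\frac{\b^2R^4}{4}+\frac{(\l-1)R^2}{2}+\log R+\frac12,
\]
both branches are smooth and concave in $R^2$, continuous at $\b R^2=1$, and diverge to $-\infty$ at $0$ and $\infty$, so a unique maximum $\bar R^2$ exists. Differentiating in $R^2$ gives the two candidates $\bar R^2=1/(1-\l)$ (valid precisely when $\b<1-\l$) and $\bar R^2=(2\b+\l-1)/\b^2$ (valid when $\b>1-\l$), and in each case substitution reproduces the piecewise expression (\ref{gaussian free energy}), which is exactly $A^g_{RS}(\b,\l)$ as identified in Theorem \ref{RSteor} and Proposition \ref{rsprop}.

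Combining the three steps sandwiches $A^g(\b,\l)$ between $A^g_{RS}(\b,\l)$ and itself, proving all three assertions of the theorem. The only genuine work is the shell argument and the exchange of limits; the rest is bookkeeping and an elementary one-variable optimization exploiting concavity in $R^2$.
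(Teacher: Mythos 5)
Your proposal follows the paper's own route exactly: the upper bound is quoted from Theorem \ref{RSteor}, the lower bound is the spherical shell comparison leading to (\ref{eq:main-formula}), and the identification of the supremum with $A^g_{RS}$ is the same concave optimization in $R^2$ using the explicit form of $A^{sf}(\b,R)=A^{sf}(\b R^2,1)$. The only difference is that you explicitly flag the exchange of the $\e\to0$ and $N\to\infty$ limits as the delicate step, which the paper merely asserts is easy; otherwise the argument is identical.
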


\section{Conclusions and Outlooks}

In this paper we introduced and solved a model of Gaussian spin glass. We have shown how to regularize the soft spins in order to tackle a right control of the thermodynamic observables and extend the tecniques developed for the Sherrington-Kirkpatrick model. 

\nin In this way we have achieved existence of thermodynamic limit of the free energy; we have computed the annealed free energy and, both looking at its own fluctuations, both looking at the fluctuations of the order parameter, we have found its region of validity in the $(\beta,\l)$ plane, so finding the critical line of the model.

\nin Furthermore, we have studied the Replica Symmetric approximation, and, by a deeper analysis through the RSB scheme, we have found that it actually give the same bound to the free energy of the model of the RS one. By further analysis we have shown that RS solution is infact exact.

\nin This was in a sense aspected, essentially for two reasons: the supposed relation with the Hopfield Model and the intimate connection with the Spherical Spin Glass \cite{CS}\cite{Tsfer}\cite{PT}\cite{T}. On the first one, that is our main matter of investigation, and also the reason why we need to introduce this model \cite{NN?}, we have already spent some words in the Introduction. About the second, there is a clear \textit{a priori} hint given by concentration of gaussian measure argoument that the two models are equivalent, strengthened by the \textit{a posteriori} fact that they share the same structure of the distribution of the overlap (both replica symmetric). This feature is not surprising, and was pointed out also in the final step of our proof, in the previous section. The relationship between the two can be deepen more, studing model with even more general noise. We plan to report soon about this topic.

\bigskip
\nin {\bf Acknowledgements\\}
\nin
The author are pleased to thank Andrea Crisanti and Michel Talagrand for useful conversations.
\newline
This work belongs to the strategy of investigation funded by the MIUR trough the FIRB grant RBFR08EKEV which is acknowledged together with a contribute by Sapienza Universit'a di Roma.
\newline
FG is further grateful to INFN (Istituto Nazionale di Fisica Nucleare).
\newline
AB is further grateful to GNFM (Gruppo Nazionale per la Fisica Matematica).
\newline
GG would like to thank Francis Comets,
Giambattista Giacomin and all the members of LPMA Universit\`e Denis Diderot Paris 7, for kind hospitality and
fruitful discussions.

\end{document}